\renewcommand{\text}[1]{#1}
\newcommand{\be}{\begin{equation}}
\newcommand{\ee}{\end{equation}}
\newcommand{\ben}{\begin{displaymath}}
\newcommand{\een}{\end{displaymath}}
\newcommand{\bea}{\begin{eqnarray}}
\newcommand{\eea}{\end{eqnarray}}
\newcommand{\bean}{\begin{eqnarray*}}
\newcommand{\eean}{\end{eqnarray*}}
\newcommand{\ba}{\begin{array}}
\newcommand{\ea}{\end{array}}
\newcommand{\bi}{\begin{itemize}}
\newcommand{\ei}{\end{itemize}}
\newtheorem{lemma}{Lemma}
\newtheorem{theorem}{Theorem}
\newtheorem{corollary}{Corollary}
\newtheorem{definition}{Definition}
\newcommand{\EA}{{\mathfrak{A}}}
\newcommand{\EB}{{\mathfrak{B}}}
\newcommand{\frakh}{{\mathfrak{H}}}
\def\1f{f_1^{1/2}}
\def\2f{f_2^{1/2}}
\def\4f{f_4^{1/2}}
\def\rc{relativistic causality}
\def\RC{RC }
\def\caussep{causal severability}
\def\AUB{\hbox{$A \cup B$}}     
\def\linebreak{\hfil\break}
\def\Rbar{\bar{R}}
\def\Ehat{\widehat{E}}
\def\Fhat{\widehat{F}}
\begin{document}
\begin{titlepage}

\vfill



\begin{center}
  \baselineskip=16pt
  {\Large\bf  An intrinsic causality principle in histories-based quantum theory: a proposal}
 \vskip 1.5cm

Fay Dowker${}^{a,b}$ and Rafael D. Sorkin${}^{b,c,d}$\\
    \vskip .6cm
           \begin{small}
     \textit{ 
      ${}^{a}${Blackett Laboratory, Imperial College, Prince Consort Road, London, SW7 2AZ, UK}\\\vspace{5pt}
                ${}^{b}${Perimeter Institute, 31 Caroline Street North, Waterloo ON, N2L 2Y5, Canada}\\\vspace{5pt}
                ${}^{c}${{School of Theoretical Physics, Dublin Institute for Advanced Studies, 10 Burlington Road, Dublin 4, Ireland}}\\\vspace{5pt}
${}^{d}${{Department of Physics, Syracuse University, Syracuse, NY 13244-1130, U.S.A.}}\\\vspace{5pt}
  }
             \end{small}                       
\end{center}

\vskip 2cm
\begin{center}
\textbf{Abstract}
\end{center}
\begin{quote}
Relativistic causality (RC) is the principle that no cause can act
outside its future lightcone, but any attempt to formulate this
principle more precisely will depend on the foundational framework that
one adopts for quantum theory.
Adopting a histories-based (or ``path integral'') framework, we relate
RC to a condition we term ``Persistence of Zero'' (PoZ), according to
which an event $E$ of measure zero remains forbidden if one forms its
conjunction with any other event associated to a spacetime region that
is later than or spacelike to that of $E$.
We also relate PoZ to the Bell inequalities by showing that, in
combination with a second, more technical condition it leads to the
quantal counterpart of Fine's patching theorem in much the same way as Bell's
condition of Local Causality leads to Fine's original theorem.
We then argue that RC \textit{per se} has very little to say on the matter of
which correlations can occur in nature and which cannot.
From the point of view we arrive at,
histories-based quantum theories
are \textit{nonlocal in spacetime}, 
and \textit{fully in compliance with relativistic causality}.
\end{quote}

\vfill
\end{titlepage}

\tableofcontents

\newpage

\section{Introduction}\label{sec:intro}

Causal relationships are important in Quantum Field Theory
(QFT), in classical General Relativity (GR), and for Quantum
Foundations. Causality appeals to us as a basic scientific category, and
yet cause and effect lack a clear definition in quantum physics. There
is no consensus on whether  EPR-like correlations indicate that quantum
physics is nonlocal, or not relativistically causal, or both.

The meaning of cause is elusive, even classically.
In quantum mechanics and  QFT it is even harder to give it meaning. 
Causality concerns {\it\/events\/} in spacetime, whereas the field
operators in terms of which one usually 
formulates the causality conditions 
(as they are usually described)
of relativistic QFT
are not events.  
The only true spacetime event in an operator formulation of a quantum
theory is a ``measurement'' 
with its ensuing ``collapse of the
state-vector'',  but measurement and collapse rely on external observers
in spacetime. Measurement and collapse are needed in the canonical
operator-formulation but these concepts are not {intrinsic} to the quantum
system. In terms of the \textit{histories} of a system (trajectories in
quantum mechanics, field configurations in QFT), though, one
\textit{can} meaningfully speak of spacetime events without reference to
anything external to the quantum system. 
It is thus worthwhile to explore questions of causality and locality in
the context of a histories-based formulation of quantum
dynamics. 

Quantum Measure Theory (QMT) is such a framework and within it
one has the concept of an event in spacetime as a set of histories
\cite{Sorkin:1987cd, Sorkin:1994dt, Sorkin:1995nj, Sorkin:2006wq, Sorkin:2007uc, Sorkin:2010kg}.  
J.B. Hartle's Generalised Quantum Mechanics (GQM) 
\cite{Hartle:1989,Hartle:1991bb, Hartle:1992as, Hartle:2006nx} 
is a closely related histories-based framework for quantum foundations. The
two frameworks GQM and QMT are based on the same concepts of history,
of event (called a ``coarse-grained history'' in the GQM literature),
and
of decoherence functional, though they
diverge in their attitudes to
decoherence and probabilities, and in their interpretational
schemes. Since the technical results in this article do not depend on an
interpretational scheme, they are equally applicable to QMT and GQM.

This article will explore causality and locality in Quantum Measure
Theory, and in connection with EPR-type correlations. We will consider
only particle trajectories and field configurations on spacetime with no
appeal to external observers or agents. Our aim is to provide a dynamical axiom
that has some claim to be regarded as a principle of relativistic
causality.  
Thus, relativistic causality will be thought of as restricting a general class of
conceivable dynamics to a subclass that deserves to be described as causal.
We will assume, however, that spacetime is provided with a fixed, or
{``background''} relativistic causal structure; we will not
attack the important questions that arise when causal structure graduates
from background to dynamical, as it necessarily does whenever gravity is involved.

We begin in Section \ref{sec:qmt} by reviewing the basics of the histories-based, path integral inspired
framework of Quantum Measure Theory, including the concept of the event Hilbert space of a system. 
We  then specialise to the case of
quantum theory in a relativistic spacetime background in Section \ref{sec:est} and introduce the
restriction of  ``Persistence of Zero" on the dynamics of such a
quantum system.  Persistence of Zero (PoZ) is, as advertised in the title of this article, an {intrinsic} condition for a quantum theory, not reliant on concepts such as measurement by an external agent. We  motivate PoZ by showing that 
it implies, 
for each event, $E$, that there exists an  \textit{event-operator} $\hat{E}$ on the
event Hilbert space of the past (to be defined), such that $\hat{E}$
acting on the universal vector (also to be defined) produces the vector in the
event Hilbert space that corresponds to event $E$. PoZ implies that if
an event $E$ has measure zero then the event `$E$ and $F$' also has
measure zero whenever $F$ is an event that is nowhere to the past of
$E$.   We also show that the event-operators for spacelike events
commute.  

In Section \ref{sec:patch} we  further support the proposal that PoZ should be considered to be a causality condition. Specifically, we show that, in the case that the event Hilbert
space of the whole system equals the event Hilbert  space of the past---which condition we call  ``Lack of Novelty''---PoZ  plays the crucial role in the quantal
``patching theorem'' analogous to that played by the familiar factorizability condition in A. Fine's 
original classical version of the theorem \cite{Fine:1982}.

We follow this with two sections of more informal discussion. In Section \ref{sec:rescue} we argue that
relativistic causality, taken in isolation, is a rather weak condition
which in particular imposes no limitation on the spacelike correlations
that a localized cause can induce among events in its future. We illustrate this by reframing two well-known examples---the Popescu-Rohrlich box and the Greenberger-Horne-Zeilinger experiment---in the language of events, showing that there is nothing about the correlations in either of these cases that warrants the conclusion that relativistic causality is violated.  The only residual implication of relativistic causality, then, is that a localized cause should not induce
correlations between events that fail to be in its future, and this
seems to be what the PoZ condition is aiming to insure. However, as we discuss in section \ref{sec:separability},  PoZ
goes further by incorporating a limited amount of spacetime locality
that one might perhaps term as 
`\caussep', 
this being one way to
think of what the quantal patching theorem expresses.  

Finally, Appendix \ref{appendix} discusses a quantal analog of classical
factorizability that holds in relativistic QFT, and which could
conceivably take over the role of Lack of Novelty in an alternative
proof of the quantum patching theorem.


\section{Quantum measure theory}\label{sec:qmt}

QMT and GQM are path-integral inspired frameworks 
in which a quantum system is characterized by a triple, $(\Omega,
\mathfrak{A}, D)$, consisting of a set of histories $\Omega$, an event
algebra $\mathfrak{A}$ (a subalgebra of the power set of $\Omega$), and
a decoherence functional
$D(\cdot ,\cdot )$ on $\mathfrak{A}\times \mathfrak{A}$. 
In this section we review the basic concepts of history, event and decoherence functional and refer the reader to 
\cite{Sorkin:1987cd,Sorkin:1994dt,Sorkin:1995nj, Sorkin:2006wq, Sorkin:2007uc, Sorkin:2010kg} for more details on QMT, 
to  \cite{Hartle:1989,Hartle:1991bb, Hartle:1992as, Hartle:2006nx} for more details on GQM, 
and to \cite{Martin:2004xi, Dowker:2010ng,Dowker:2010qh,Sorkin:2012xx,Surya_2020} for more details on the event Hilbert space and the vector-measure.

\subsection{Event Algebra}

The kinematics of a quantum system in QMT is specified by 
the set $\Omega$ of \emph{histories} and one may have in mind that this is the 
set of histories over which the path integral is performed.
Each history $C$ in $\Omega$ is as complete a
description of the physical system as is in principle possible
in the theory. 
For example, in $n$-particle quantum mechanics, a history is a set of $n$
trajectories in spacetime and in a scalar field theory, a history is a real or complex
function on spacetime.

Any physical statement about (or property of) 
the system is a statement about (or property of) the history of the system. And, therefore, the statement/property corresponds to a subset of $\Omega$ in the obvious way. For example,
in the case of the non-relativistic particle, if $R$ is a region of
spacetime, the statement/property ``the particle passes through $R$'' corresponds to 
the set of all trajectories that pass through
$R$.  We adopt the terminology of stochastic
processes in which such subsets of $\Omega$ are referred to as \emph{events}. 

An \emph{event algebra} on a sample space $\Omega$ is a non-empty
collection, $\EA$, of subsets of $\Omega$ such that
\begin{enumerate}
\item  $\Omega\setminus E \in
\EA$ for all $E \in \EA$ (closure under complementation), 
\item $E \cup F \in
\EA$ for all  $E, F \in \EA $ (closure under finite union). 
\end{enumerate}
It follows from the definition that $\emptyset \in \EA$, $\Omega \in \EA$
and
$\EA$ is closed under finite intersections. An event algebra is an \textit{algebra of sets} by a standard definition,
and a \textit{Boolean algebra}. For events qua statements about the system, set operations correspond  to logical combinations of statements in the usual way: 
union is  ``inclusive or'', intersection is ``and'', complementation is ``not'' \textit{etc.} 

An event algebra $\EA$ is also an algebra in the sense of a vector space over a set of scalars, 
$\mathbb{Z}_2$, 
with intersection as multiplication and symmetric difference (or ``boolean sum'') as addition:
\begin{itemize}
\item[] $E F := E \cap F$, for all $E,  F \in \EA$;
\item[] $E + F := (E \setminus F) \cup (F \setminus
E) $ , for all $E, F \in \EA$.
\end{itemize}
In this algebra, the unit element, $\mathbb{1} \in \EA$, is the whole set of histories $\mathbb{1}: = \Omega$. The
zero element, $\mathbb{0}\in \EA$, is the empty set  $\mathbb{0}:=
\emptyset$. Note  that $E+F = E\cup F$ if and only if  $E$ and $F$ are
disjoint i.e. if and only if $EF = \mathbb{0}$. We will use this arithmetic
way of expressing set algebraic formulae whenever convenient, both for events and also for regions of spacetime.  

If an event algebra $\EA$ is also closed under countable unions then
$\EA$ is a $\sigma$-algebra but we do not impose this extra condition 
on the event algebra.
(See \cite{Dowker:2010qh,Sorkin:2012xx,Surya:2020cfm} for work on
extending the domain of the quantum measure from $\EA$ to a larger
subset of the $\sigma$-algebra generated by $\EA$.)

\subsection{Decoherence functional, quantum measure, vector-measure} \label{DecoherenceFunctional}
A  \emph{decoherence functional} on an event algebra $\EA$ is a map $D
: \EA \times \EA \to \mathbb{C}$ that encodes  both the initial conditions and the dynamics
of the quantum system and satisfies the conditions:
\begin{enumerate}
\item $D(E,F) =
D(F,E)^*$ for all $E, F \in \EA$ (\emph{Hermiticity});
\item $D(E,F + G) = D(E,
F) + D(E,G)$ for all $E, F, G \in \EA$  s.t. $FG= \mathbb{0}$ 
(\emph{Additivity});
\item $D(\Omega, \Omega)=1$ (\emph{Normalisation});
\item 
$D(E,E) \geq 0$ for all $E \in \EA$ (\emph{Weak Positivity}).
\end{enumerate}
See section 6 of \cite{Hartle:2006nx} for these axioms in the context of GQM.

\noindent
A \emph{quantum measure} on an event algebra $\EA$ is a map $\mu: \EA
\to \mathbb{R}$ such that,
\begin{enumerate}
 \item  $\mu(E) \geq 0$ for all $E \in \EA$
 (\emph{Positivity});
 \item $ \mu(E + F + G) - \mu(E + F) -
 \mu(F + G) - \mu(G + E) +\mu(E) +
 \mu(F) + \mu(G) = 0$, 
 for all $E, F, G \in \EA$ s.t. $EF=FG=GE= \mathbb{0}$ ({\emph{Quantum Sum Rule}});
 \item $\mu(\Omega)=1$ (\emph{Normalisation}).
\end{enumerate}

If $D : \EA \times \EA \to \mathbb{C}$ is a decoherence functional
then the map $\mu : \EA \to \mathbb{R}$ defined by
$\mu(E):=D(E,E)$ is a quantum measure. And, conversely, if $\mu$ is a quantum measure
on $\EA$  then there exists (a non-unique) decoherence functional $D$ such that 
$\mu(E)=D(E,E)$ \cite{Sorkin:1994dt}. 

The question of which is the primitive concept, quantum measure or
decoherence function, remains open. However, in this article we take the
decoherence functional to be the primitive concept because we will
assume all quantum systems satisfy the further axiom of strong
positivity which -- at our current level of understanding -- can only be
directly imposed on the decoherence functional:
\begin{definition}[Strong Positivity]
 A decoherence functional $D:  \EA \times \EA \to {\mathbb{C}}$
 is strongly positive if, for each finite set of events $\{E_1, \dots E_m\}$ 
 the corresponding $m \times m$ Hermitian matrix $M_{ij} := D(E_i, E_j)$ 
is positive semi-definite.  
\end{definition}
Strong positivity of the decoherence functional 
holds in established unitary quantum theories due to the form of the decoherence functional as a 
Double Path Integral (DPI) of Schwinger-Keldysh form, 
see for example equation (\ref{skform}) in Appendix \ref{appendix}.   
More generally, it has been shown that the set of strongly positive
quantum-measure systems is the unique set that is closed under
tensor-product composition and is ``full'' in the sense that if a system
can be composed with every element of the set, then that system is in the set  
\cite{strongpositivity2022}.

\subsection{Event Hilbert space and vector-measure}\label{sec:EHS} 
Henceforth in this article all quantum systems are assumed to have strongly positive decoherence functionals. Then, 
for each quantum system $(\Omega, \EA, D)$  a Hilbert space, $\frakh$, can be constructed as the completion of a quotient of the free vector space over the event algebra
\cite{Martin:2004xi,Dowker:2010ng}, 
such that for each event, $E\in\EA$, there is an \textit{event-vector}\footnote{In Generalized Quantum Mechanics, an event-vector is called a branch vector (e.g. \cite{Hartle:2006nx}).}
 $| E\rangle \in \frakh$ such that:
\begin{itemize}
\item $ \langle E \,|\, F \rangle = D(E,F), \ \forall E, F \in \EA$;
\item $ | E + F \rangle  = |E\rangle + | F \rangle, \ \forall$ $E, F \in \EA$ s.t. $EF = \mathbb{0}$;
\item $ \frakh$ is spanned by $\{ | E\rangle \}_{ E \in \EA} $. 
This spanning set is (very) over-complete: there is not a unique 
expansion of a vector in $ \frakh$ as a linear combination of event-vectors (see, for example, the previous point). 
\end{itemize}
We call this Hilbert space, $\frakh$, the \textit{event Hilbert space}. The 
corresponding map $| \cdot \rangle : \EA \rightarrow \frakh$ is called
the quantum vector-measure on the event algebra 
\cite{Martin:2004xi, Dowker:2010ng,Dowker:2010qh,Sorkin:2012xx,Surya_2020}.  

\begin{itemize}
\item An event has quantum measure zero if and only if its vector-measure is zero. 
\item If $\EB$ is a subalgebra of $\EA$ then the event Hilbert space of $\EB$ is a subspace of the event Hilbert space of $\EA$.
\item The vector-measure $|\Omega\rangle$ of the whole history space $\Omega$ is a unit vector in $\frakh$, and we
 call it \textit{the universal vector}.  If $\EB$ is a (unital) subalgebra of
 $\EA$ then the universal vector is in the event Hilbert space of
 $\EB$. The minimal subalgebra of $\EA$ is $\{\Omega, \mathbb{0}\}$, and
 its event Hilbert space is the 1-d vector space spanned by $|\Omega\rangle$.  
\item In a unitary quantum theory in which the decoherence functional
 is defined using the Schwinger-Keldysh double path integral with a
 pure initial state,
 the event Hilbert space coincides with the canonical Hilbert 
 space.\footnote{ \label{foot:EHS}
    Technically, they are naturally isomorphic.  At the level of
    mathematical theorems, 
    this remains to be proved in general. The existing theorems cover 
    the cases of nonrelativistic quantum mechanics and finite quantum
    systems  \cite{Dowker:2010ng}.}  
 In this case, the universal vector $|\Omega\rangle$  in the event Hilbert space is identified with
 the pure initial state in the canonical Hilbert space.
 If the decoherence functional is defined using a mixed initial state
 or ``density matrix''
 $\rho$ of rank $r$ then the event Hilbert space is a direct sum of $r$
 copies of the canonical Hilbert space.  
 The universal vector $|\Omega\rangle$ is then 
 the direct sum of the eigenvectors of $\rho$, 
 each normalised so that its norm squared is its eigenvalue \cite{Dowker:2010ng}. 
\item If $X$ is a family of events that constitute a partition of 
  $\Omega$\footnote{In Hartle's GQM this is called an exclusive, exhaustive set of coarse grained histories.}  
then 
\begin{align}
  \sum_{ E \in X } | E\rangle = |\sum_{ E \in X } E\rangle =  |\Omega\rangle\,.
\end{align} 
\end{itemize}

\section{Spacetime as an organising principle}\label{sec:est}

Our concern is relativistic quantum physics and we assume that the
histories of the system, the elements of $\Omega$, reside in a fixed
spacetime endowed with a causal structure that is mathematically a
partial order.  Any ``back reaction'' on the spacetime metric or causal
structure is thus being ignored.
This background may be a continuum such as a 4-d globally hyperbolic
spacetime or it may be a discrete partial order such as a causal set.

Let us call this spacetime $M$, and its causal order-relation $\preceq$.
The availability of this invariable substratum lets us organise events
in $\EA$ according to their location in $M$.
For each spacetime region $R\subseteq M$ there is a  subalgebra $\EA_R \subseteq \EA$,
where an  event $E$ is in $\EA_R$ iff  the property that defines
whether a history $\Gamma \in \Omega$ is in $E$ or not is a property of
$\Gamma$ in region $R$. In other words, if one can tell whether $\Gamma$
is in $E$ or not by examining the restriction of $\Gamma$ to $R$ then
$E$ is in $\EA_R$, otherwise it is not. We say `event $E$ {is in region}
$R$' to mean the same thing as `$E$ is an element of algebra $\EA_R$'. 
For each spacetime region $R$, the vector-measures of the events in $\EA_R$ span a subspace of the event Hilbert space: thus to each region
corresponds a sub-Hilbert space $\frakh_R$ of $\frakh$.
If two events
are located in mutually spacelike regions,
we will say that they are mutually spacelike events.

\subsection{Persistence of Zero}

Let us  consider a spacetime region $R$ and define the region $\Rbar$
to be the set of points that are not to the causal future of any point in $R$: 
\begin{align}
  \Rbar := M \setminus J^+(R) =  \{ y \in M :\, \nexists x \in R\  \textit{s.t.}\  x \preceq y\}  \,.
\end{align}
Note that  $\Rbar  R = \mathbb{0}$, since $ R \subseteq J^+(R)$.

If the dynamics is relativistically causal, then $\Rbar$ is the region
of spacetime that no event $E$ in $R$ can influence.  The question now
is whether this informal prohibition can be expressed (at least in part,
and subject to later revision when quantum-foundational questions are
better clarified) as a condition on the vector-measure of the system.

To that end, let us begin by asking whether it is possible to associate
to the event $E$ in $R$, not only a {\it\/vector\/} in $\frakh_R\subseteq\frakh$, 
but an {\it\/operator\/} on $\frakh$.   
We will denote this putative operator by $\Ehat$ and refer to it
as an \textit{event-operator}.%
\footnote{In the decoherent histories literature this would be called a
  \textit{class operator} see e.g. \cite{Hartle:2006nx,
  PhysRevD.73.024011}.  We know of nothing answering exactly to this
  description in algebraic quantum field theory, but one might
  hope to find it within the operator-algebra associated with region $R$.} 
In the first instance, however, it works better to limit the domain of 
$\Ehat$  to $\frakh_{\Rbar}\subseteq\frakh$. 

The Hilbert space $\frakh_{\Rbar}$ is spanned by the {event-vectors} 
$\{ |F\rangle:\, F \in \EA_{\Rbar}\}$. Let $E$ be an event in $R$ and
let us try to define a map $\Ehat$ from $\frakh_{\Rbar}$ to 
$\frakh$ by defining it on an arbitrary event-vector in $\frakh_{\Rbar}$ as:
\begin{align} \label{classop}
  \Ehat | F \rangle : = | E F \rangle \ \ \forall \ F \in \EA_{\Rbar}\,,
\end{align}
and then extending it by linearity to the whole of $\frakh_{\Rbar}$. $EF$ is the conjunction event `$E$ and $F$'.  
Given the causal relation between  $R$ and $\Rbar$,  although $E$ may be partly or wholly spacelike to $F$, $E$ cannot be to the causal past of $F$; it may therefore help to understand (\ref{classop}) to think of $EF$ as the event `$F$ \textit{and then} $E$'. 
Now (\ref{classop}) is not necessarily a consistent definition, because the expansion of a 
vector in $\frakh_{\Rbar}$ as a linear combination of event-vectors is not unique. 
This motivates the following definition: 
\begin{definition}[Persistence of Zero] \label{def:poz} 
  A quantum system satisfies Persistence of Zero (PoZ) if for every region $R$, every finite collection $C$ of events in 
  $\Rbar$, and every event $E$ in $R$
  we have 
\begin{align} \label{eq:pozero}
  \sum_{F \in C}  \psi_F | F \rangle = 0 \implies  
  \sum_{F \in C}  \psi_F | EF \rangle = 0 \,,
\end{align}
  where $\{\psi_F\}$ are complex coefficients. 
\end{definition}

It is characteristic of quantum theory that a subevent of an event
of measure zero can have nonzero measure due to interference: for
example in the iconic double slit experiment the measure of the event of
the particle arriving at a dark fringe is zero but the measure of the
event of the particle passing through the left slit before arriving at a
dark fringe is nonzero. But  PoZ implies that if an event $F$ has
measure zero then any subevent of the form $F E$, where $E$ is an event
that is \textit{future and/or spacelike} to $F$,  
also has measure zero. 
So we can already see that PoZ is some sort of causality condition. 
Moreover, it is just what the definition of $\Ehat$ needs for consistency: 
\begin{lemma}\label{PoZ} If a quantum system satisfies PoZ then (\ref{classop}) defines, 
   for each event $E$ in a region $R$ such that $\Rbar$ is nonempty, 
   a linear map, the event-map  $\Ehat: \frakh_{\Rbar} \rightarrow \frakh$. 
\end{lemma}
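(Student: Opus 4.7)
The plan is to build $\Ehat$ through a universal-property construction: define it first on the algebraic span of event-vectors in $\frakh_{\Rbar}$, and then read off PoZ as precisely the consistency condition that makes this definition unambiguous.

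First I would set up the following factorization. Let $V$ be the free complex vector space with basis $\EA_{\Rbar}$, and let $\pi : V \to \frakh$ be the canonical linear map sending the basis element $F$ to $|F\rangle$; its image is the algebraic span of $\{|F\rangle : F \in \EA_{\Rbar}\}$, a dense subspace of $\frakh_{\Rbar}$. Define a second linear map $\Ehat_0 : V \to \frakh$ by $\Ehat_0(F) = |EF\rangle$, which is well-defined since $E \in \EA_R$ and $F \in \EA_{\Rbar}$ force $EF \in \EA$. Both maps are linear by construction, so the existence of a linear $\Ehat$ with $\Ehat|F\rangle = |EF\rangle$ is equivalent to the kernel containment $\ker(\pi) \subseteq \ker(\Ehat_0)$, which would allow $\Ehat_0$ to factor as $\Ehat \circ \pi$ on $\mathrm{Im}(\pi)$.

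Second, I would observe that this kernel containment is exactly the content of PoZ. If $\sum_{F \in C} \psi_F F \in \ker(\pi)$, then by definition $\sum_{F \in C} \psi_F |F\rangle = 0$, and (\ref{eq:pozero}) then yields $\sum_{F \in C} \psi_F |EF\rangle = 0$, placing the element in $\ker(\Ehat_0)$. Phrased vector-wise: if a vector admits two finite expansions $\sum_{F \in C} \psi_F |F\rangle = \sum_{G \in C'} \phi_G |G\rangle$, combining them into a single vanishing linear combination indexed by $C \cup C'$ (merging coefficients on any overlap) and applying PoZ once gives $\sum_F \psi_F |EF\rangle = \sum_G \phi_G |EG\rangle$. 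This establishes the lemma on the dense span, with linearity inherited tautologically from that of $\Ehat_0$.

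Third, if $\frakh_{\Rbar}$ is intended as the closed sub-Hilbert space rather than just the algebraic span, one still needs to extend $\Ehat$ by continuity. This is the step I expect to be most delicate: it requires a bound $\|\Ehat v\| \le C\|v\|$ for $v$ in the span, which is not immediate from PoZ alone. The natural route is to exploit strong positivity of $D$ by forming the joint Gram matrix of the family $\{|F\rangle\} \cup \{|EF\rangle\}$ and extracting a Cauchy--Schwarz-type estimate from its positive semi-definiteness. Should no uniform bound be available, $\Ehat$ would have to be read as a densely defined (possibly unbounded) operator; but the conceptual content of the lemma — that PoZ is exactly what is needed to make the prescription $|F\rangle \mapsto |EF\rangle$ consistent — is already established by the first two steps.
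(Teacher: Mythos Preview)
Your proposal is correct and follows essentially the same approach as the paper: both recognize that PoZ is precisely the consistency condition needed to make the prescription $|F\rangle \mapsto |EF\rangle$ a well-defined linear map on the span of event-vectors, with your free-vector-space/kernel-containment formulation being a more formal packaging of what the paper states in one sentence. Your third step actually goes beyond the paper, which explicitly relegates the extension-by-continuity issue to a footnote and declines to address it (declaring the proof rigorous only when $\EA$ is finite); so your caution there is well placed, but no more is expected for this lemma.
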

\begin{proof}
The event-vectors in $\frakh_{\Rbar}$ span $\frakh_{\Rbar}$ and 
condition PoZ is exactly the condition that $\Ehat$ is well defined when extended by linearity to all 
linear combinations of event-vectors.\footnote{This proof is rigorous under the assumption that the
   event algebra $\EA$  is finite (i.e. that $\Omega$ comprises only a
   finite number of histories).  In the contrary case, the sums in
   (\ref{eq:pozero}) are only guaranteed to fill out a dense subspace of the event Hilbert space.  We
   will ignore all such technicalities in this paper, recalling in
   this connection that similar technicalities affect the
   definition of the path-integral itself.} 
\end{proof}

\begin{corollary} \label{corPoZ}
 If a quantum system satisfies PoZ then (\ref{classop}) defines, for each event $E$ in a region $R$ such that $\Rbar$ is nonempty, 
 and each subregion $Q \subseteq \Rbar$, 
the event-map,  
 $ \Ehat:  \frakh_{Q} \rightarrow \frakh_{Q + R}$. 
\end{corollary}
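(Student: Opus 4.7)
The plan is to observe that this corollary is essentially a refinement of Lemma \ref{PoZ}: once the event-map $\Ehat$ is shown to be well defined on $\frakh_{\Rbar}$, the claim about its restriction to $\frakh_{Q}$ with image in $\frakh_{Q+R}$ reduces to tracking the spacetime regions in which the relevant conjunction events live. First, I would note that $Q \subseteq \Rbar$ implies $\EA_Q \subseteq \EA_{\Rbar}$, and hence, from the definition of the event Hilbert space of a region, $\frakh_Q \subseteq \frakh_{\Rbar}$. By Lemma \ref{PoZ}, the event-map $\Ehat$ is already a well defined linear map on all of $\frakh_{\Rbar}$, so its restriction to the subspace $\frakh_Q$ is automatically a well defined linear map into $\frakh$.

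The substantive content of the corollary is therefore the sharper claim that this restriction takes values in the subspace $\frakh_{Q+R}$. For this, I would work on the spanning set of event-vectors: pick $F \in \EA_Q$, so that $\Ehat|F\rangle = |EF\rangle$ by \eqref{classop}. The event $EF = E \cap F$ is characterised by a property that depends on the restriction of a history to $R$ (via $E$) and on its restriction to $Q$ (via $F$); hence it depends only on the restriction of the history to $R \cup Q$, and so $EF \in \EA_{R \cup Q}$. Since $Q \subseteq \Rbar$ and $\Rbar \cap R = \mathbb{0}$ by the very definition of $\Rbar$, we have $Q \cap R = \mathbb{0}$, so $R \cup Q = R + Q$ in the arithmetic notation of Section \ref{sec:qmt}. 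Thus $|EF\rangle \in \frakh_{R+Q}$.

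Finally, I would extend by linearity: an arbitrary vector in $\frakh_Q$ is (up to the density caveat in the footnote to Lemma \ref{PoZ}) a finite linear combination $\sum_F \psi_F |F\rangle$ with $F \in \EA_Q$, and PoZ ensures that its image $\sum_F \psi_F |EF\rangle$ depends only on the vector, not on the chosen expansion. Since each $|EF\rangle$ lies in $\frakh_{R+Q}$, so does the sum. Together with the linearity already established in Lemma \ref{PoZ}, this gives the stated event-map $\Ehat : \frakh_Q \to \frakh_{Q+R}$.

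I do not expect any real obstacle here; the only subtlety is the bookkeeping of regions, in particular the identification $R \cup Q = R + Q$, which depends on $Q \subseteq \Rbar$ and $R \cap \Rbar = \mathbb{0}$. The well-definedness issue that made Lemma \ref{PoZ} nontrivial has already been discharged, and is inherited here by restriction.
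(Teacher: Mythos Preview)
Your argument is correct. The paper states this corollary without proof, treating it as an immediate consequence of Lemma~\ref{PoZ}; your proposal supplies exactly the natural details one would fill in, namely the restriction from $\frakh_{\Rbar}$ to $\frakh_Q$ together with the observation that $EF \in \EA_{Q+R}$ for $F \in \EA_Q$.
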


\begin{corollary} 
If a quantum system satisfies PoZ then event-map $\Ehat$ acting on the universal vector equals the event-vector for $E$:
\begin{align} 
  \Ehat|\Omega\rangle = | E \Omega\rangle = | E\rangle\,.
\end{align}
\end{corollary}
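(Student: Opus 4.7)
The plan is to apply the defining equation (\ref{classop}) of the event-map $\Ehat$ directly to the universal vector, noting that $\Omega$ itself qualifies as a legitimate choice of $F \in \EA_{\Rbar}$ for every region $R$. Recall that $\Omega$ is the unit element of the event algebra, and that the universal vector lies in the event Hilbert space of every unital subalgebra (this is one of the bulleted facts stated about the event Hilbert space in Section \ref{sec:EHS}). In particular, $\Omega \in \EA_{\Rbar}$ and $|\Omega\rangle \in \frakh_{\Rbar}$, since the trivially-true property ``being in $\Omega$'' can be decided by examining the restriction of any history to $\Rbar$ (it is always true).

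Given this, the proof is a one-line computation: set $F = \Omega$ in (\ref{classop}) to obtain
\begin{align}
  \Ehat|\Omega\rangle = |E\Omega\rangle,
\end{align}
and then use $E \Omega = E \cap \Omega = E$ to conclude $|E\Omega\rangle = |E\rangle$. Consistency of this application is already furnished by Lemma \ref{PoZ}, which guarantees under PoZ that $\Ehat$ is well-defined on all of $\frakh_{\Rbar}$ independently of the (non-unique) way a vector is represented as a linear combination of event-vectors. There is no real obstacle here; the only point worth flagging is that the identification of $|\Omega\rangle$ as an element of $\frakh_{\Rbar}$ is the same vector as the universal vector in $\frakh$, which is ensured by the inclusion $\frakh_{\Rbar} \subseteq \frakh$ and the fact that the vector-measure is defined on all of $\EA$ via the same map $|\cdot\rangle$.
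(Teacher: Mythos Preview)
Your proposal is correct and is exactly the immediate computation the paper has in mind; the corollary is stated without proof in the paper precisely because it follows at once from the defining equation (\ref{classop}) with $F=\Omega$ together with $E\Omega=E$. Your remark that $\Omega\in\EA_{\Rbar}$ (and hence $|\Omega\rangle\in\frakh_{\Rbar}$) is the only point that needed noting, and you handled it correctly.
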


\begin{corollary} \label{linear}
  If a quantum system satisfies PoZ and $E$ and $F$ are disjoint events in a region $R$ then $\widehat{E+F} = \Ehat + \Fhat$. 
\end{corollary}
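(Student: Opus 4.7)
The plan is to verify the identity $\widehat{E+F} = \widehat{E} + \widehat{F}$ on an arbitrary event-vector in $\frakh_{\Rbar}$ and then invoke the spanning property together with the lemma that guarantees the maps are well-defined linear operators. Since $E, F \in \EA_R$ are disjoint, $E+F$ is also in $\EA_R$, so by Lemma \ref{PoZ} the operator $\widehat{E+F}$ is well-defined, as are $\widehat{E}$ and $\widehat{F}$ individually.

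Let $G \in \EA_{\Rbar}$. The first step is to expand $(E+F)G$ using the distributivity of intersection over symmetric difference, namely
\begin{align}
(E+F)G = EG + FG.
\end{align}
The second step is to observe that $EG$ and $FG$ are disjoint: from $EF = \mathbb{0}$ we get $(EG)(FG) = EFG = \mathbb{0}$. This means the additivity axiom for the vector-measure (the second bullet in Section \ref{sec:EHS}) applies and gives
\begin{align}
|(E+F)G\rangle = |EG + FG\rangle = |EG\rangle + |FG\rangle.
\end{align}

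Applying the definition \eqref{classop} of the event-map to each piece yields
\begin{align}
\widehat{E+F}\,|G\rangle = |EG\rangle + |FG\rangle = \widehat{E}\,|G\rangle + \widehat{F}\,|G\rangle = (\widehat{E} + \widehat{F})\,|G\rangle.
\end{align}
Since $G \in \EA_{\Rbar}$ was arbitrary and the event-vectors $\{|G\rangle : G \in \EA_{\Rbar}\}$ span $\frakh_{\Rbar}$, linearity of all three operators (guaranteed by Lemma \ref{PoZ}) promotes this equality from the spanning set to the whole of $\frakh_{\Rbar}$, finishing the argument.

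There is no real obstacle here: the content of the corollary is essentially a transcription of the additivity of the vector-measure into the language of event-operators, with PoZ doing its work already in Lemma \ref{PoZ} to ensure the three maps are unambiguously defined. The only point that requires a moment of care is the distributivity identity $(E+F)G = EG + FG$, which is a purely Boolean fact about symmetric difference and intersection and does not rely on any causal hypotheses.
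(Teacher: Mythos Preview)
Your proof is correct. The paper states this corollary without an explicit proof, and your argument---checking the identity on event-vectors via distributivity $(E+F)G = EG+FG$, disjointness of $EG$ and $FG$, and additivity of the vector-measure, then extending by linearity and spanning---is exactly the natural route the paper leaves implicit.
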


\begin{corollary} \label{sumto}
  If a quantum system satisfies PoZ and $\{E_1, E_2\dots E_n\}$ is a collection of events in a region $R$ that is a
  partition of $\Omega$, then the  event-maps 
  $\{\Ehat_1, \Ehat_2, \dots \Ehat_n\}$ 
  sum to the identity map from $\frakh_{\Rbar}$  
  to  $\frakh_{\Rbar}$ considered as a subspace of $\frakh$.   
\end{corollary}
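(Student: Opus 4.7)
The plan is to reduce the statement to a combination of Corollary~\ref{linear} applied inductively together with a direct evaluation of $\widehat{\Omega}$. The key observation is that since $\{E_1, \dots, E_n\}$ is a partition of $\Omega$, the events are pairwise disjoint and, with symmetric difference as Boolean-algebra addition, $E_1 + E_2 + \cdots + E_n = E_1 \cup E_2 \cup \cdots \cup E_n = \Omega$.

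First I would extend Corollary~\ref{linear} from two summands to $n$ by a straightforward induction. The base case $n=2$ is Corollary~\ref{linear} itself. For the inductive step, set $E' := E_1 + \cdots + E_{n-1}$; this event lies in $\EA_R$ since $\EA_R$ is closed under symmetric difference, and it is disjoint from $E_n$ because all the $E_i$ are pairwise disjoint. Applying Corollary~\ref{linear} to $E'$ and $E_n$, and using the inductive hypothesis on $E'$, I conclude
\begin{align}
\sum_{i=1}^n \hat{E}_i \;=\; \widehat{\sum_{i=1}^n E_i} \;=\; \widehat{\Omega}.
\end{align}

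Second, I would identify $\widehat{\Omega}$ as the identity on $\frakh_{\Rbar}$. The event $\Omega$ lies in $\EA_R$ for any region $R$ (the property ``the history belongs to $\Omega$'' holds for every history regardless of what its restriction to $R$ is), so Lemma~\ref{PoZ} does produce a well-defined event-map $\widehat{\Omega}:\frakh_{\Rbar}\to\frakh$. For any $F\in\EA_{\Rbar}$, definition (\ref{classop}) gives $\widehat{\Omega}|F\rangle = |\Omega F\rangle = |F\rangle$. Since the event-vectors $\{|F\rangle : F\in\EA_{\Rbar}\}$ span $\frakh_{\Rbar}$, extending by linearity shows that $\widehat{\Omega}$ is the identity on $\frakh_{\Rbar}$; in particular its image is contained in $\frakh_{\Rbar}\subseteq\frakh$, as required by the statement.

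There is essentially no hard step. The only point worth flagging is that the individual summands $\hat{E}_i$ need not themselves map $\frakh_{\Rbar}$ back into $\frakh_{\Rbar}$ — Corollary~\ref{corPoZ} applied with $Q=\Rbar$ places their images only in $\frakh_{\Rbar+R}$ — so the fact that their \emph{sum} lands in $\frakh_{\Rbar}$ and acts as the identity there is a genuine (though easily obtained) consequence of the partition hypothesis rather than of the linear-algebra of the individual event-maps.
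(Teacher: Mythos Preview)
Your argument is correct and matches the paper's approach: the paper states this corollary without proof, but the later proof of Lemma~\ref{beamops} (the special case $n=2$) proceeds exactly as you do, by acting on an arbitrary event-vector $|E_Z\rangle$ and using $\sum_i E_i = \Omega$ together with $\Omega E_Z = E_Z$. Your route via Corollary~\ref{linear} and induction is a slightly more formal packaging of the same computation.
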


\subsection{Spacetime arrangement in the case of interest}

In this article we are interested in the following physical setup in
spacetime.  As shown in figure \ref{frog},
$Z$ is a region that contains its own causal past: $Z$ is a
`past set'.  Let  regions $A$ and $B$ both  lie in the future domain of
dependence of $Z$ and be disjoint from $Z$ (recall that the future domain of
dependence of $Z$ includes $Z$) and be spacelike to each other.  The
union of regions $Z$, $A$ and $B$ is also a past set. The heuristic
justification for this arrangement is that,
in a relativistically causal theory,
any cause of a
correlation between events in $A$ and $B$ must be in $Z$.  
In particular any ``preparation event'' in an experiment of EPR type
will automatically be contained within the region $Z$.

\begin{figure}[h!]
\centering
{\includegraphics[scale=0.25]{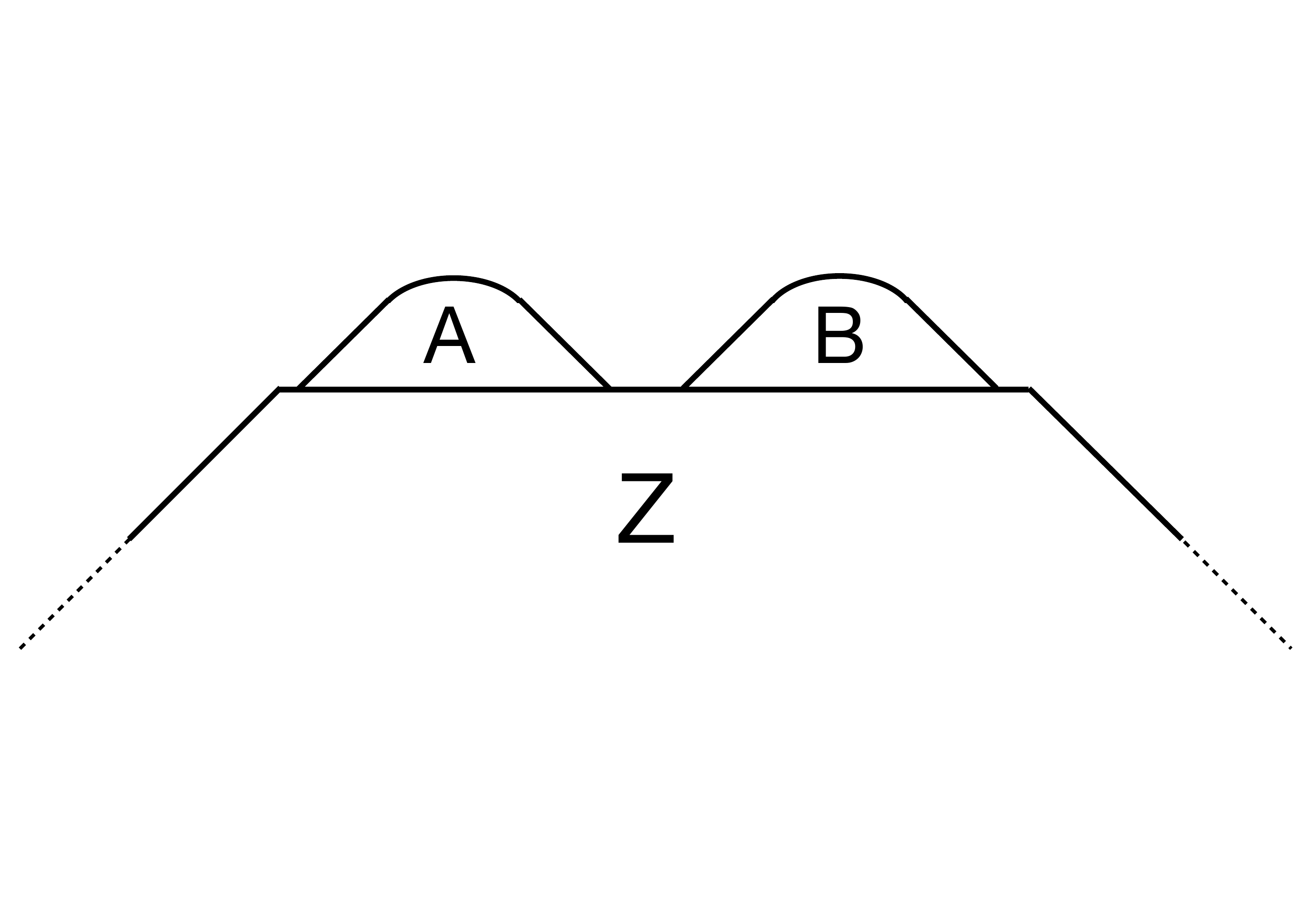}}
\caption{\label{frog} 
  $Z$ is a past set (i.e. contains its own causal past). 
  $A$ and $B$ are in the future domain of dependence of $Z$, do not intersect 
  $Z$ and are spacelike to each other. 
  The union of $Z$, $A$ and $B$ is also a past set.}
\end{figure}

In unitary quantum field theory in a globally hyperbolic spacetime, the
event Hilbert space of the future domain of dependence of $Z$ equals the
event Hilbert space of $Z$. For example, if $Z$ is the past of a Cauchy 
surface then the
event Hilbert space of $Z$ and the event Hilbert space of the future domain of dependence of $Z$ both equal the canonical Hilbert space of the whole system.\footnote%
{As mentioned previously in footnote \ref{foot:EHS} the existing theorems cover the cases of nonrelativistic quantum mechanics and finite quantum systems. The formal extension to QFT would follow the same proof-structure: showing that the physically induced map from the event Hilbert space to the canonical Hilbert space--which is injective because it preserves the inner product---is surjective  \cite{Dowker:2010ng}.}

This can be thought of as a condition of `lack of novelty': the past
region $Z$ is rich enough in events 
that further events anywhere in the
future domain of dependence of $Z$ do not add anything new to the
physics as
encoded in the event Hilbert space. 
We formalise this
condition for quantum measure theories in general:
\begin{definition}[Lack of Novelty] \label{def:lon} 
A quantum system satisfies Lack of Novelty (LoN) if,  for every region of spacetime $Z$ that contains its own past, 
the event Hilbert space of the future domain of dependence of $Z$ equals the event Hilbert space of $Z$.
\end{definition}

In the case of the regions shown in figure \ref{frog} LoN implies that 
Hilbert spaces $\frakh_{Z+A}$, $\frakh_{Z+B}$ and  $\frakh_{Z+A+B}$ all equal $\frakh_{Z}$, where 
subscript $Z+A$ refers to the union of regions $Z$ and $A$, and so on. 
So, for example, if $E_A$ is an event in $A$ then 
there exist complex coefficients $\{\psi_F\}$ such that 
\begin{align}
 |E_A \rangle = \sum_{F \in \EA_Z} \psi_{F} | F\rangle\,.
\end{align}

\begin{lemma}\label{commute}
Let the quantum system satisfy PoZ (definition \ref{def:poz}) and LoN (definition \ref{def:lon}).  Let regions $Z$, $A$ and $B$ be as described in figure \ref{frog}. 
If $E_A$ is an event in $A$ and 
$E_B$ is an event in $B$, then their respective event-maps, $\Ehat_A$ and $\Ehat_B$, are operators on 
$\frakh_Z$,  and they commute. 
\end{lemma}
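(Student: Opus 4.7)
The plan is to combine the causal geometry of figure \ref{frog} with Corollary \ref{corPoZ} and LoN to view both $\Ehat_A$ and $\Ehat_B$ as operators on the common space $\frakh_Z$, and then verify commutativity on the spanning set of event-vectors indexed by $\EA_Z$.

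First I would establish the causal inclusions needed. Since $Z$ is a past set with $Z \cap A = \emptyset$, no point of $A$ can lie in the causal past of any point of $Z$ (else it would itself belong to $Z$, contradicting $Z \cap A = \emptyset$), so $Z \cap J^+(A) = \emptyset$, i.e.\ $Z \subseteq \bar A$. The same argument gives $Z \subseteq \bar B$, and spacelikeness of $A$ and $B$ gives $A \subseteq \bar B$ and $B \subseteq \bar A$. Hence $Z + B \subseteq \bar A$ and $Z + A \subseteq \bar B$. Corollary \ref{corPoZ} then supplies $\Ehat_A$ as a map on $\frakh_{Z+B}$ (and in particular on $\frakh_Z$), with image in $\frakh_{Z+A+B}$ (respectively $\frakh_{Z+A}$), and symmetrically for $\Ehat_B$. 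As already noted in the discussion following Definition \ref{def:lon}, LoN identifies all of $\frakh_Z$, $\frakh_{Z+A}$, $\frakh_{Z+B}$ and $\frakh_{Z+A+B}$, so both $\Ehat_A$ and $\Ehat_B$ become genuine linear operators on $\frakh_Z$.

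To prove commutativity I would evaluate the two composites on the spanning family $\{|F\rangle : F \in \EA_Z\}$ of $\frakh_Z$. Applying the defining formula (\ref{classop}) twice---first with $Z \subseteq \bar B$, then with $Z + B \subseteq \bar A$---gives
\begin{align*}
\Ehat_A \Ehat_B \,|F\rangle \;=\; \Ehat_A \,|E_B F\rangle \;=\; |E_A E_B F\rangle,
\end{align*}
and symmetrically $\Ehat_B \Ehat_A \,|F\rangle = |E_B E_A F\rangle$. Since intersection of subsets of $\Omega$ is commutative, $E_A E_B F = E_B E_A F$, so the two vectors coincide; linearity then propagates the commutator relation from this spanning family to the whole of $\frakh_Z$.

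The only delicate point, and what I expect to be the main (but still quite mild) obstacle, is bookkeeping: the single symbol $\Ehat_A$ stands for a family of maps obtained by restricting the universal event-map of Lemma \ref{PoZ} to various subspaces of $\frakh_{\bar A}$. Consistency of these restrictions is exactly what Corollary \ref{corPoZ} asserts, which guarantees that the intermediate vector $|E_B F\rangle \in \frakh_{Z+B}$ is sent unambiguously by $\Ehat_A$ to $|E_A E_B F\rangle$, independently of the domain one chooses to regard $\Ehat_A$ as acting on. Once this identification is accepted, the entire statement collapses to the commutativity of set intersection combined with the causal-geometry and LoN inputs identified above.
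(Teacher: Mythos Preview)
Your proposal is correct and follows essentially the same route as the paper: use Corollary \ref{corPoZ} together with LoN to regard both $\Ehat_A$ and $\Ehat_B$ as operators on $\frakh_Z$, then verify commutativity on the spanning set of event-vectors by two applications of (\ref{classop}) and the commutativity of intersection. Your explicit verification of the causal inclusions $Z,\,Z+B\subseteq\bar A$ and $Z,\,Z+A\subseteq\bar B$, and your remark on the consistency of the various restrictions of $\Ehat_A$, simply make explicit what the paper's proof leaves implicit.
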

\begin{proof} 
Corollary  \ref{corPoZ} gives us the event-map
$\Ehat_A: \frakh_Z \rightarrow  \frakh_{Z+A}$. 
Since $\frakh_{Z+A} = \frakh_Z$,  $\Ehat_A$ is an operator on $\frakh_{Z}$. 
Similarly for $\Ehat_B$. Henceforth we refer to such operators as event-operators. 

\noindent
Then, for each $E_Z \in \EA_Z$, 
\begin{align}
  \Ehat_B \Ehat_A  |E_Z\rangle & =  \Ehat_B |E_A E_Z  \rangle \\
  &= |E_B E_A E_Z   \rangle\\
  &  =\Ehat_A  \Ehat_B |E_Z \rangle\,.
\end{align}
The first equality follows by applying Corollary \ref{corPoZ} to region $A$ and the 
second equality  follows by applying Corollary \ref{corPoZ} to region $B$. 
We also used the fact that multiplication commutes in the Boolean algebra so $E_B E_A E_Z =  E_A E_B E_Z$. 
Since the event-vectors $| E_Z\rangle$  span $\frakh_Z$ the result follows. \\
\end{proof}


\section{Patching theorems}\label{sec:patch}

In this section we prove 
the quantum analog of Fine's patching theorem
(Proposition 3 of \cite{Fine:1982}), taking PoZ and LoN as our inputs.

Let us consider, for definiteness and following Fine, the Clauser-Horne-Shimony-Holt (CHSH)
scenario \cite{Clauser:1969} with two spin-half particles and two
``local experiments'', one in each of two spacelike wings.  Each local
experiment takes the form of a Stern Gerlach analyzer that admits of two possible
orientations or ``settings'', and through which the particle emerges in
one or the other of two exit beams, the upper beam or the lower
beam, where it registers in a detector.
This scenario can be generalised to any number of spacelike separated
regions with any number of settings per analyzer and any number of beams
per setting (the particles might have different spins, or there might be
sequences of concatenated beam-splitters), and our results generalise
mutatis mutandis.

Let us call the two spacelike separated regions where the local
experiments take place $A$ and $B$, and let us call the possible
settings, $a$ or $a'$ in $A$ and $b$ or $b'$ in $B$. We refer to $a$,
$a'$, $b$ and $b'$ as local 
settings. There are
then four possible global settings: $ab$, $ab'$, $a'b$
and $a'b'$.

The two spin-half particles are prepared and sent to the analyzers, one
particle to $A$ and one particle to $B$.   Let $Z$ be a region of
spacetime such that $Z$ is a past set and such that $A$ and $B$ lie in
the future domain of dependence of $Z$ without intersecting $Z$, and the
union of $A$, $B$ and $Z$ is also a past set, as shown in figure
\ref{frog}.  

For each local setting, $a$ say for the analyzer in $A$, there are two
possible beams in which the particle can be detected and two beam-events, corresponding to the particle being detected in the upper beam (u) and the lower beam (d) respectively. There are
four sets of experimental probabilities, one for each global setting:  
\begin{align} \label{expprobs}
\mathbb{P}^{ab}(ij)\,, 
\mathbb{P}^{ab'}(ij')\,, 
\mathbb{P}^{a'b}(i'j)\,, \  \textrm{and}\ 
\mathbb{P}^{a'b'}(i'j')\,, 
\end{align}
where $i, i', j, j'$  labels the event of the 
particle being detected in the upper beam or lower beam  (each label taking values $u$ or $d$ respectively) for setting 
$a, a', b, b'$ respectively.  We use the shortened term ``beam-event" to
signify the detection of a particle in a particular beam, and as above,
we sometimes use a simplified notation in which the labels $i,i', j, j'$ 
stand for the beam-events itself.  

We now postulate that the four probability distributions are compatible with each
other in the sense that $\mathbb{P}^{ab}(j) = \mathbb{P}^{a'b}(j)$, etc.
These conditions (the so-called no-signalling conditions) signify that
the experimental probabilities of the events in $B$ do not depend on the
setting in $A$ and vice versa.

The patching theorems are about different measure theories in the same region of spacetime and we will need the following concepts and definitions. For a measure theory in a spacetime $M$ with history-space $\Omega$, if  $R\subset M$ is a region of spacetime, then a history in $R$ is an element of $\Omega$ restricted to $R$. We use the notation $\Omega_R = \{ \Gamma|_R :\, \Gamma \in \Omega\}$ for the space of histories restricted to $R$. 
\begin{definition} [History-event in a region]  Let  $R\subseteq M$ be a region of spacetime,   a history-event  in $R$ is a cylinder set,
\begin{align}
E_{\gamma_R} := \{ \Gamma \in \Omega : \, \Gamma |_R = \gamma_R\}\,,
\end{align}
where $\gamma_R$ is a history in $R$.
\end{definition} 

\begin{definition}[Agreement of measure theories in a region] \label{def:agree} Two measure theories, $\{\Omega^1, \EA^1, D^1\} $ and  $\{\Omega^2, \EA^2, D^2\} $, in $M$ \textit{agree in region} $R$ if:
\begin{itemize}
\item[i.] The two history-spaces restricted to $R$ are equal, 
\begin{align}\Omega^1_R = \Omega^2_R\,,
\end{align} 
in which case there is an obvious physical isomorphism between  $\EA^1_R$ and $\EA^2_R$, the event algebras for the two theories restricted to $R$. 
\item[ii.] The decoherence functionals $D^1$ and $D^2$, restricted to $\EA^1_R$ and $\EA^2_R$ respectively---denoted $D^1_R$ and $D^2_R$---are equal via the isomorphism. i.e. if the physical isomorphism is 
\begin{align}\phi: \EA^1_R\rightarrow \EA^2_R
\end{align} 
then 
\begin{align}
D^1(E, \bar{E}) = D^2(\phi(E), \phi(\bar{E}))\,, \  \forall E, \bar{E} \in \EA^1_R\,.
\end{align}

\end{itemize} 
\end{definition}

When two measure theories $1$ and $2$ agree in a region $R$, then we can and will henceforth identify the algebras $\EA^1_R$ and $\EA^2_R$ via the physical isomorphism. And then the decoherence functionals are equal in $R$: $D^1_R = D^2_R$,
and further, the sub-event Hilbert spaces in $R$ are equal: $\frakh^1_R = \frakh^2_R$.  Strictly, since $\frakh^1_R$ and $\frakh^2_R$ are subspaces of different event Hilbert spaces in different theories, they are different spaces but we can and will identify them. 

\subsection{Fine's classical patching theorem}

In the framework of QMT, a \textit{classical theory} is a quantum measure theory $(\Omega, \EA, D)$ that satisfies the additional condition
\begin{align}
  \label{classical}
  D(E, F) = D(E F, E F) \ \ \forall \ E, F \in \EA\,.
\end{align}
In a classical theory, all the information in the decoherence functional is encoded in the measure,
$ \mu (E) = D(E, E)$, which measure satisfies the Kolmogorov sum rule and is referred to as a classical measure or equivalently as a probability measure. 
A classical measure theory $(\Omega, \EA, \mu)$ is a level 1 theory in the hierarchy of measure theories delineated in \cite{ Sorkin:1994dt}.

\begin{definition}[Factorizability] \label{def:factor}
  A classical measure theory  $(\Omega, \EA, \mu)$ on a spacetime $Z+A+B$ where regions $Z$, $A$ and $B$ are as
  in figure \ref{frog},   is factorizable if   
  \begin{align}\label{factorizable}
      \mu(E_A E_B E_{\gamma_Z}) \; \mu(E_{\gamma_Z}) = \mu(E_A E_{\gamma_Z}) \; \mu(E_B E_{\gamma_Z})
  \end{align}
  for all $E_A \in \EA_A$ and $E_B \in \EA_B$ and all history-events $E_{\gamma_Z}$ in $\EA_Z$. 
\end{definition}

\noindent
If $\mu(E_{\gamma_Z})$ is nonzero, and if we divide through by its
square, then this becomes the statement (sometimes called `screening
off') that 
the joint probability of $E_A$ and $E_B$ factorizes
when conditioned on a  history-event in $Z$.

\begin{definition} [Factorizable model]  \label{def:facmod}
A \textit{factorizable model} for the probabilities  (\ref{expprobs}) in the 
CHSH scenario is a set of 
four factorizable classical measure theories in $Z+A+B$, 
$(\Omega^{\alpha\beta}, \EA^{\alpha\beta}, \mu^{\alpha\beta})$ labelled by the four global settings $\alpha\beta = ab,ab',a'b, a'b'$, 
 with the following properties. 
\begin{itemize} 
 \item[i.] For each local setting, $a,a',b$ or $b'$, the two theories that share that setting agree (as per definition \ref{def:agree}) in the relevant spacetime region:  theories $ab$ and $ab'$ agree in $Z+A$, theories $a'b$ and $a'b'$ agree in $Z+A$, theories $ab$ and $a'b$ agree in $Z+B$, and theories $ab'$ and $a'b'$ agree in $Z+B$.
 This implies that all four theories agree in $Z$. 
 \item[ii.] For each local setting, the particle is detected either
  in the upper beam or in the lower beam. For example, for local setting $a$ let  $E^a_{u}$ and $E^a_{d}$ be the beam-events where we have implemented our declared intention to identify events in regions where theories agree. Then 
 $E^a_{u}\cup E^a_{d} =\Omega^{ab}$ in measure theory $ab$ and  $E^a_{u}\cup E^a_{d}= \Omega^{ab'}$ in measure theory $ab'$.\footnote{This doesn't imply that $\Omega^{ab} = \Omega^{ab'}$. Strictly, the event $E^a_{u}$ in measure theory $ab$ and the event  $E^a_{u}$
  in measure theory $ab'$ are events in two different theories and the histories in $\Omega^{ab}$ and $ \Omega^{ab'}$ differ in $B$.}
    Similarly for each of the other local settings: e.g. for setting $a'$ we have $E^{a'}_{u}\cup E^{a'}_{d} = \Omega^{a'b}$ in theory $a'b$ and  $E^{a'}_{u}\cup E^{a'}_{d} = \Omega^{a'b'}$  in theory $a'b'$,  and so on. 
\item[iii.] Each measure $\mu^{\alpha\beta}$ has the corresponding experimental 
  probabilities $\mathbb{P}^{\alpha\beta}$ (\ref{expprobs}) as marginals:
  \begin{align}
    \sum_{k} \mu^{ab}(E^a_iE^b_jE^Z_{\gamma_k}) &= \mathbb{P}^{ab}(ij)\,,\nonumber\\
        \sum_{k} \mu^{ab'}(E^a_iE^{b'}_{j'}E^Z_{\gamma_k}) &= \mathbb{P}^{ab'}(ij')\,,\nonumber\\
            \sum_{k} \mu^{a'b}(E^{a'}_{i'}E^b_jE^Z_{\gamma_k}) &= \mathbb{P}^{a'b}(i'j)\,,\nonumber\\
                \sum_{k} \mu^{a'b'}(E^{a'}_{i'}E^{b'}_{j'}E^Z_{\gamma_k}) &= \mathbb{P}^{a'b'}(i'j')\,,\label{eq:probmarg}
  \end{align}
where $k$ labels the histories in $Z$ and $i,j, i',j' = u,d$. 
\end{itemize}
\end{definition}

\vskip.5cm

\noindent
Now we can state 
\begin{theorem}[Fine's patching theorem \cite{Fine:1982}]  If there
 exists a factorizable model (Definition  \ref{def:facmod}) for the probabilities (\ref{expprobs}) then  there exists 
  a joint probability measure on all the
  beam-events and all events in $Z$ that has the four 
  factorizable measures, $\{\mu^{\alpha\beta}\}$ 
  as marginals and that therefore (by (\ref{eq:probmarg})) has the probability measures (\ref{expprobs})  as further marginals. 
\end{theorem}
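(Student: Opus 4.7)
The plan is to construct the joint measure $\nu$ explicitly as a classical hidden-variable model, with a history in $Z$ playing the role of the hidden variable. The intuition is exactly the textbook one: factorizability says that, conditional on the $Z$-history, the $A$- and $B$-outcomes become independent, so $\nu$ must be the product of a marginal on $\EA_Z$ with four single-setting conditional distributions. Everything reduces to checking that these ingredients can be extracted unambiguously from the four $\mu^{\alpha\beta}$.

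First I would extract a common probability measure on $\EA_Z$. The pairwise agreements stipulated in Definition \ref{def:facmod}(i) all take place in regions containing $Z$, and they connect the four theories transitively (for instance $ab \sim ab' \sim a'b'$ and $ab \sim a'b$). Hence the four restrictions to $\EA_Z$ coincide with a single measure $\mu_Z$. Write $p_k := \mu_Z(E^Z_{\gamma_k})$. Similarly, for each local setting $s \in \{a,a',b,b'\}$, the two theories sharing $s$ agree on $Z+A$ or $Z+B$, so the quantity $\mu^{\alpha\beta}(E^s_i E^Z_{\gamma_k})$ depends only on $(s,i,\gamma_k)$. When $p_k > 0$ this yields a well-defined conditional probability
\[
P(i \mid s,\gamma_k) := \mu^{\alpha\beta}(E^s_i E^Z_{\gamma_k})/p_k,
\]
and when $p_k = 0$ I can set $P(u \mid s,\gamma_k) = P(d \mid s,\gamma_k) = 1/2$, a choice that is irrelevant because those atoms will carry no weight.

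Next I would define $\nu$ on the algebra generated by the four beam-event families and the $Z$-history-events by prescribing it on the atoms as a product:
\[
\nu(E^a_i E^{a'}_{i'} E^b_j E^{b'}_{j'} E^Z_{\gamma_k}) := p_k\, P(i|a,\gamma_k)\, P(i'|a',\gamma_k)\, P(j|b,\gamma_k)\, P(j'|b',\gamma_k).
\]
Non-negativity and normalization to one are immediate from the properties of the factors. The substantive verification is that $\nu$ has each of the four $\mu^{\alpha\beta}$ as the marginal obtained by summing out the two unused beam-variables. For $\mu^{ab}$ this comes down to a one-line calculation: summing $\nu$ over $(i',j')$ gives
\[
p_k\,P(i|a,\gamma_k)\,P(j|b,\gamma_k) = \frac{\mu^{ab}(E^a_i E^Z_{\gamma_k})\,\mu^{ab}(E^b_j E^Z_{\gamma_k})}{p_k} = \mu^{ab}(E^a_i E^b_j E^Z_{\gamma_k}),
\]
where the last equality is factorizability (\ref{factorizable}) for $p_k>0$ and both sides vanish for $p_k=0$. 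By additivity this extends from atoms to the whole algebra, and the other three marginals follow by the same argument. The experimental probabilities (\ref{expprobs}) then appear as further marginals via (\ref{eq:probmarg}).

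The construction is essentially forced, so there is no serious obstacle; the work is bookkeeping. The main care points are (a) deducing existence of a single $\mu_Z$ on $\EA_Z$ from the pairwise agreements on $Z+A$ and $Z+B$ via transitivity on their common subregion $Z$, and (b) in an infinite-$\Omega_Z$ setting, replacing explicit division by $p_k$ with regular conditional probabilities. The conceptual content of the theorem is simply that factorizability plus the agreement conditions is exactly the information needed to assemble a single local-hidden-variable measure from the four setting-indexed measures.
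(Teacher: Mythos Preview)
Your proposal is correct and is essentially identical to the paper's proof: your $\nu(ii'jj'k) = p_k\prod_s P(\cdot|s,\gamma_k)$ is exactly the paper's $\mu^{patch}(ii'jj'k) = \mu^a(ik)\mu^{a'}(i'k)\mu^b(jk)\mu^{b'}(j'k)/\mu(k)^3$ once you clear denominators, and the marginal check via factorizability is the same computation. The only cosmetic difference is that you phrase things in terms of conditional probabilities while the paper writes the product formula directly; the $p_k=0$ case is handled equivalently (you set conditionals to $1/2$, the paper declares the product zero).
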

\begin{proof} We will use $i,i',j,j'$ and $k$ as shorthand for the events $E^{a}_{i}, E^{a'}_{i'}, E^{b}_{j}, E^{b'}_{j'}$ and $E^Z_{\gamma_k}$. \\

First note that due to the agreement of the four  measures we have $\mu^{ab}(ik) = \mu^{ab'}(ik)$ and so we can define
$\mu^{a}(ik) : = \mu^{ab}(ik)= \mu^{ab'}(ik)$. And similarly for each of the other 3 local settings, $a'$, $b$ and $b'$. 
Similarly we can define $\mu(k) : = \mu^{ab}(k) = \mu^{a'b}(k)=\mu^{ab'}(k)=\mu^{a'b'}(k)$.  

Now, define a joint probability measure on all the beam-events and the $Z$-history-event $k$:
\begin{align}\label{jointp}
  \mu^{patch}(i i' j j' k): = \mu^{a}(ik)\mu^{a'}(i'k)\mu^{b}(jk)\mu^{b'}(j'k) / \left(\mu(k)\right)^{3}\,.
\end{align}
(\ref{jointp}) is well defined 
since if $\mu(k) = 0$ then all the probabilities in the numerator vanish too
(in which case we  define $\mu^{patch}(i i' j j 'k)$ to equal zero).
The factorizability of the four measures $\{\mu^{\alpha\beta}\}$ implies that $\mu^{patch}$ has
each of them as marginals. For example,
\begin{align}
\sum_{i'j'}  \mu^{patch}(i i' j j' k)& = \sum_{i'j'} \mu^{a}(ik)\mu^{a'}(i'k)\mu^{b}(jk)\mu^{b'}(j'k) \left(\mu(k)\right)^{-3}\\
& =  \mu^{a}(ik)\mu^{a'}(k)\mu^{b}(jk)\mu^{b'}(k) \left(\mu(k)\right)^{-3}\\
& =  \mu^{a}(ik)\mu^{b}(jk)\left(\mu(k)\right)^{-1}\\
& =  \mu^{ab}(ijk) \,.
\end{align}
The calculation for the other three global settings is similar. \\
Summing further over the history-events $k$ in $Z$ then gives, by (\ref{eq:probmarg}),  the probabilities (\ref{expprobs}) as marginals.
\end{proof}

By summing (\ref{jointp}) over the history-events in $Z$, one obtains a patched probability measure on the beam-events alone:
\begin{align}\label{eq:jointpexp}
  \mu^{patch}(i i' j j') = \sum_k  \mu^{patch}(i i' j j' k)\,,
\end{align}
and Fine further shows that the existence of such a probability measure implies the CHSH-Bell inequalities \cite{Fine:1982}. \\

We specialised to the CHSH scenario for concreteness and so that we didn't have to introduce complicated notation for arbitrary scenarios as for example in \cite{Dowker:2013tva}. The extension of Fine's patching theorem to general scenarios is straightforward. Finding the corresponding generalised Bell inequalities, on the other hand, is a hard problem.

\subsection{A quantum patching theorem}

Factorizability (\ref{factorizable}) has a direct analog in QMT \cite{Craig:2006ny}:
\begin{align} \label{qfactor} 
    D(E_A E_B  E_{\gamma_Z},\bar{E}_A \bar{E}_B E_{\bar{\gamma}_Z}) \; D(E_{\gamma_Z}, 
E_{\bar{\gamma}_Z}) 
  = D(E_A  E_{\gamma_Z}, \bar{E}_A  {E_{\bar{\gamma}_Z}})\;  D(E_B  E_{\gamma_Z}, \bar{E}_B  
  E_{\bar{\gamma}_Z}) \,,
\end{align}
for all $E_A, \bar{E}_A \in \EA_A$, $E_B, \bar{E}_B \in \EA_B$, and all history-events $E_{\gamma_Z}, E_{\bar{\gamma}_Z} \in \EA_Z$.

This looks just like factorizability in the classical
case except ``doubled" with each relevant event in $A$ or $B$ or $Z$
replaced by a pair of events in $A$ or $B$ or $Z$, respectively ($E_A$
replaced by $(E_A, \bar{E}_A)$, for example). And, promisingly,  this
``quantum factorizability'' condition is satisfied formally by
relativistic QFT, 
as we show in Appendix \ref{appendix}. 
However,  the
patched joint decoherence functional for the CHSH scenario obtained by
using the analog of the formula of (\ref{jointp})   fails  to exist in
general because the numerator does not necessarily vanish when the
denominator vanishes.   
Nor is it necessarily strongly positive even if it is defined \cite{Craig:2006ny}. 

In \cite{Craig:2006ny} it was shown that in ordinary quantum mechanics,
the existence of projection operators for the beam-events in each of the local
settings allows the construction of a patched decoherence functional.  
Since event-operators can take the place of projection operators in the
derivation, and since PoZ enables the construction of event-operators,
we look now to PoZ instead of factorizability as the basis of patching.

In the quantum CHSH scenario most generally, instead of 4 probability measures on the beam-events there are 4 decoherence functionals, one for each global setting: 
\begin{align}
D^{ab}(ij , {\bar{i}} {\bar{j} })\,, 
D^{a'b}(i'j, \bar{i}'\bar{j})\,, 
D^{ab'}(ij', \bar{i}\bar{j}')\,, \ \textrm{and} \ 
D^{a'b'}(i'j', \bar{i}'\bar{j}')\,.   \label{fourdcfs}
\end{align}
When the $i,i',j,j'$ refer to  beam-events that are detection events that behave classically, these decoherence functionals will be diagonal, and the diagonal elements will be the experimental probabilities (\ref{expprobs}). For the quantum patching theorem below, however, this diagonal property is not needed and so we will consider (\ref{fourdcfs})
 merely as 4 decoherence functionals that are compatible on common events (``no-signalling''),  i.e. 
\begin{align}
\sum_{j \bar{j}} D^{ab}(ij , {\bar{i}} {\bar{j} }) 
= \sum_{j' \bar{j}'}D^{ab'}(ij', \bar{i}\bar{j}')\,,
\end{align}
and all similar conditions. 

We now define a PoZ model for decoherence functionals by analogy with factorizable model for probability measures (definition \ref{def:facmod}): 
\begin{definition}[PoZ model] \label{def:pozmodel} A \textit{PoZ model}
for the decoherence functionals  (\ref{fourdcfs}) is a set of four PoZ quantum measure theories in spacetime $Z+A+B$,
 $(\Omega^{\alpha\beta }, \EA^{\alpha\beta }, D^{\alpha\beta Z})$ labelled by the four global settings $\alpha\beta = ab,ab',a'b, a'b'$, 
 with the following properties.
 \begin{itemize}
  \item[i.] For each local setting, $a,a',b$ or $b'$, the two theories that share that setting agree (as per definition \ref{def:agree}) in the relevant spacetime region, i.e.  theories $ab$ and $ab'$ agree in $Z+A$, theories $a'b$ and $a'b'$ agree in $Z+A$, theories $ab$ and $a'b$ agree in $Z+B$, and theories $ab'$ and $a'b'$ agree in $Z+B$.
 This implies that all four theories agree in $Z$. 
\item[ii.]  For each local setting, the particle is detected either
  in the upper beam or in the lower beam. For example, for local setting $a$ 
 $E^a_{u}\cup E^a_{d} =\Omega^{ab}$ in measure theory $ab$ and  $E^a_{u}\cup E^a_{d}= \Omega^{ab'}$ in measure theory $ab'$. 
    Similarly for each of the other local settings: e.g. for setting $a'$ we have $E^{a'}_{u}\cup E^{a'}_{d} = \Omega^{a'b}$ for in theory $a'b$ and  $E^{a'}_{u}\cup E^{a'}_{d} = \Omega^{a'b'}$  in theory $a'b'$,  and so on.  
\item[iii.] Each decoherence functional $D^{\alpha\beta Z}$ has the corresponding decoherence functional $D^{\alpha\beta}$ (\ref{fourdcfs}) as marginals. For example:
  \begin{align}
    \sum_{k \, \bar{k} } D^{ab Z}(ijk, \bar{i}\bar{j}\bar{k})
    ) = D^{ab}(ij, \, \bar{i}\bar{j})\,,\label{eq:dcfmarg}
  \end{align}
 where $k$  is shorthand for the history-events $\gamma_k$ in $Z$ labelled by $k$. And similarly for $a'b$, $ab'$ and $a'b'$.
\end{itemize}

\end{definition}

PoZ implies that there are four Hilbert spaces, $\frakh^{abZ}$, $\frakh^{ab'Z}$, $\frakh^{a'bZ}$
 and $\frakh^{a'b'Z}$, one for each global setting. Since all the four PoZ theories agree in $Z$, these four Hilbert spaces each contain as a subspace the Hilbert space $\frakh_Z$ that is spanned by the event-vectors $\{|E_Z  \rangle :\, E_Z \in \EA_Z\}$. The additional LoN assumption then implies that $\frakh^{abZ}=\frakh^{ab'Z}=\frakh^{a'bZ} = \frakh^{a'b'Z} =\frakh_Z$. Then by Lemma \ref{commute}, for each beam-event for each local setting there exists an operator on  $\frakh_Z$,  so there are 8 event-operators on $\frakh_Z$---$\{\hat{E}_i^a, \hat{E}_{i'}^{a'}, \hat{E}_j^b, \hat{E}_{j'}^{b'}\}$---and the event-operators for beam-events in $A$ commute with the event-operators for 
 beam-events in $B$. 
 \begin{lemma}\label{beamops}
If there is a PoZ model for the four decoherence functionals (\ref{fourdcfs}) and if LoN holds for the four measure theories in the model, then for each local setting, $a,a',b$ or $b'$, the two 
event-operators on $ \frakh_Z$ corresponding to the up and down beam-events for that setting sum to the identity operator. For example for setting $a$,  $\hat{E}^a_{u} 
+ \hat{E}^a_{d} = \mathbb{1}$.
\end{lemma}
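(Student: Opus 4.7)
The plan is to apply Corollary \ref{sumto} to the partition $\{E^a_u, E^a_d\}$ of $\Omega$ in region $A$ (using, say, the theory $ab$; since theories $ab$ and $ab'$ agree in $Z+A$, the resulting event-operators coincide regardless of which of these two theories we use). Corollary \ref{sumto} then immediately yields that $\hat{E}^a_u + \hat{E}^a_d$ equals the identity map from $\frakh_{\bar A}$ to $\frakh_{\bar A}$ viewed as a subspace of $\frakh^{abZ}$. The remaining work is to restrict this identity down to the subspace $\frakh_Z$ and to verify that it coincides there with the identity on $\frakh_Z$ viewed through the event-operator perspective supplied by Lemma \ref{commute}.

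First I would confirm the inclusion $\frakh_Z \subseteq \frakh_{\bar A}$, which in turn follows from $Z \subseteq \bar A = M \setminus J^+(A)$. If some $y \in Z$ lay in $J^+(A)$, there would exist $x \in A$ with $x \preceq y$; since $Z$ contains its own causal past, this would force $x \in Z$, contradicting $A \cap Z = \emptyset$. Hence $Z \subseteq \bar A$ and so $\frakh_Z \subseteq \frakh_{\bar A}$.

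Next I would check that $\hat E^a_u$ and $\hat E^a_d$ are genuinely operators on $\frakh_Z$, not just maps into a larger Hilbert space. By Corollary \ref{corPoZ} applied with $R=A$ and $Q=Z\subseteq \bar A$, the event-map $\hat E^a_u$ sends $\frakh_Z$ into $\frakh_{Z+A}$; since $Z+A \subseteq D^+(Z)$, the LoN assumption gives $\frakh_{Z+A}=\frakh_Z$, so $\hat E^a_u$ is an operator on $\frakh_Z$, and likewise for $\hat E^a_d$. Therefore, restricting the identity $\mathbb{1}_{\frakh_{\bar A}}$ produced by Corollary \ref{sumto} to the subspace $\frakh_Z$, I obtain $\hat E^a_u|\psi\rangle + \hat E^a_d|\psi\rangle = |\psi\rangle$ for every $|\psi\rangle \in \frakh_Z$, i.e.\ $\hat E^a_u + \hat E^a_d = \mathbb{1}$ on $\frakh_Z$. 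The identical argument, with the appropriate relabelling of theories and interchange of the roles of $A$ and $B$, disposes of the settings $a'$, $b$, and $b'$.

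I do not anticipate a substantive obstacle: the lemma is essentially a direct specialisation of Corollary \ref{sumto}, once the two bookkeeping facts $\frakh_Z\subseteq\frakh_{\bar A}$ and $\frakh_{Z+A}=\frakh_Z$ have been marshalled. The only mildly subtle point is ensuring that the ``identity from $\frakh_{\bar A}$ into $\frakh_{\bar A}$ as a subspace of $\frakh$'' really does restrict to the identity of $\frakh_Z$ regarded as a $\frakh_Z\to\frakh_Z$ operator, which is automatic once both summands are known to preserve $\frakh_Z$.
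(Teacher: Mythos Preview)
Your proposal is correct and essentially follows the paper's approach: the paper proves the lemma by computing directly on event-vectors $|E_Z\rangle$ that $(\hat E^a_u+\hat E^a_d)|E_Z\rangle=|(E^a_u+E^a_d)E_Z\rangle=|\Omega E_Z\rangle=|E_Z\rangle$ and then uses that these span $\frakh_Z$, which is precisely the content of Corollary~\ref{sumto} specialised to $\frakh_Z$. Your extra bookkeeping ($Z\subseteq\bar A$, and that LoN makes the operators preserve $\frakh_Z$) is sound but could be shortcut by working on $\frakh_Z$ from the outset, as the paper does.
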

\begin{proof}  Consider an event-vector,  $| E_Z \rangle$, in $\frakh_Z$. Choose local setting $a$ and consider the two corresponding beam-events $ {E}^a_{u}$ and ${E}^a_{d}$. By condition (ii) in the definition of a PoZ model, $ {E}^a_{u} + {E}^a_{d} = \Omega$ in each of the  measure theories $ab$ and $ab'$ that 
$a$ is a part of.  

Then, 
\begin{align}
(\hat{E}^a_{u} 
+ \hat{E}^a_{d}) | E_Z \rangle 
& =  | ( {E}^a_{u} + {E}^a_{d}) E_Z \rangle\\
 &= | \Omega E_Z \rangle\\
& = |E_Z \rangle\,.
\end{align} 
The event-vectors span the Hilbert space $ \frakh_Z$ and so $\hat{E}^a_{u} 
+ \hat{E}^a_{d} = \mathbb{1}$ on $ \frakh_Z$. There is a similar proof for each of the other local settings, $a'$, $b$ and $b'$. 
\end{proof}

 \begin{theorem}[Quantum patching theorem]  If there exists a PoZ model (Definition \ref{def:pozmodel}) for 4 decoherence functionals (\ref{fourdcfs}) and  
each of the 4 measure theories $D^{\alpha\beta Z}$ in the PoZ model also satisfies LoN for $\alpha\beta = ab, ab', a'b, a'b'$ 
  then there exists, mathematically,  a joint decoherence functional on all the beam-events and all events in $Z$ that has the 4 PoZ model decoherence functionals, $D^{\alpha\beta Z}$, as  marginals. 
\end{theorem}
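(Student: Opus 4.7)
The plan is to construct the joint decoherence functional explicitly from the event-operators that PoZ and LoN furnish, in close analogy with the classical patching formula (\ref{jointp}) but in Schwinger-Keldysh ``sandwich'' form on the common Hilbert space $\frakh_Z$. By Lemma \ref{commute}, each of the eight beam-events has an event-operator acting on $\frakh_Z$, and the $A$-operators commute with the $B$-operators; by Lemma \ref{beamops}, the two beam-event-operators for each local setting sum to $\mathbb{1}$ on $\frakh_Z$; and since $\{\gamma_k\}$ partitions $\Omega$, the event-vectors $\{|k\rangle\}$ lie in $\frakh_Z$ with $\sum_k|k\rangle=|\Omega\rangle$.

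I propose, on atomic joint events, the definition
$$
D^{patch}(iji'j'k,\bar{i}\bar{j}\bar{i}'\bar{j}'\bar{k}) := \langle\bar{k}|\,(\hat{E}^{a'}_{\bar{i}'})^{\dagger}(\hat{E}^{a}_{\bar{i}})^{\dagger}(\hat{E}^{b'}_{\bar{j}'})^{\dagger}(\hat{E}^{b}_{\bar{j}})^{\dagger}\,\hat{E}^{b}_{j}\hat{E}^{b'}_{j'}\hat{E}^{a}_{i}\hat{E}^{a'}_{i'}\,|k\rangle,
$$
extended to the Boolean algebra on the joint sample space by additivity (Corollary \ref{linear}). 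Equivalently, $D^{patch}(E,F)=\langle\Phi_F|\Phi_E\rangle$ where $|\Phi_E\rangle:=\sum_{(i,j,i',j',k)\in E}\hat{E}^{b}_{j}\hat{E}^{b'}_{j'}\hat{E}^{a}_{i}\hat{E}^{a'}_{i'}|k\rangle$. This Gram form makes Hermiticity, weak positivity, and even strong positivity automatic.

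To check the four marginal conditions, take for concreteness the $D^{abZ}$ marginal: summing over $i',j',\bar{i}',\bar{j}'$, the four $a'/b'$ operator-sums collapse to $\mathbb{1}$ by Lemma \ref{beamops} and leave $\langle\bar{k}|(\hat{E}^{a}_{\bar{i}})^{\dagger}(\hat{E}^{b}_{\bar{j}})^{\dagger}\hat{E}^{b}_{j}\hat{E}^{a}_{i}|k\rangle$. Two applications of Corollary \ref{corPoZ}---first with $R=A$, $Q=Z$, then with $R=B$, $Q=Z+A$---give $|E^{a}_{i}E^{b}_{j}E^{Z}_{\gamma_{k}}\rangle=\hat{E}^{b}_{j}\hat{E}^{a}_{i}|k\rangle$, so this expression is exactly $D^{abZ}(ijk,\bar{i}\bar{j}\bar{k})$; the $ab'$, $a'b$, and $a'b'$ marginals are obtained by the same sum-to-identity mechanism applied to different index blocks. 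Normalization follows by pushing the marginalization all the way: summing also over $i,j,\bar{i},\bar{j},k,\bar{k}$ reduces the expression to $\sum_{k\bar{k}}\langle\bar{k}|k\rangle=\langle\Omega_Z|\Omega_Z\rangle=1$ (this last equality being the normalization axiom for any of the input decoherence functionals restricted to $Z$).

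The main thing to get right is fixing an operator order in the definition that reproduces all four marginals simultaneously. This is possible precisely because Lemma \ref{commute} says the $A$- and $B$-operators commute on $\frakh_Z$: the $a/a'$ block and the $b/b'$ block can be placed independently on each side, so that collapsing any pair of ``extraneous'' operators always leaves the surviving $A$-side factor immediately adjacent to the surviving $B$-side factor. I would emphasize that, in contrast to the factorizability-based attempt of \cite{Craig:2006ny} (whose patched object can fail to exist or to be strongly positive), the PoZ-based construction is manifestly a Gram matrix and therefore yields a \emph{strongly positive} joint decoherence functional---exactly the payoff of upgrading quantum factorizability to Persistence of Zero.
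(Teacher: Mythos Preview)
Your proof is correct and is essentially the paper's own argument: both define $D^{patch}$ as the Gram matrix of the vectors $\hat{E}^a_i\hat{E}^{a'}_{i'}\hat{E}^b_j\hat{E}^{b'}_{j'}\,|E^Z_k\rangle$ (your ordering differs only by swapping the $A$- and $B$-blocks, which is immaterial by Lemma~\ref{commute}), and both obtain the four marginals by collapsing the extraneous operator pairs to $\mathbb{1}$ via Lemma~\ref{beamops}. Your explicit remark that the Gram form yields strong positivity automatically is a worthwhile bonus that the paper leaves implicit; the only cosmetic discrepancy is that your slot convention $D^{patch}(E,F)=\langle\Phi_F|\Phi_E\rangle$ is the complex conjugate of the paper's $\langle E|F\rangle=D(E,F)$, which of course does not affect anything.
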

\begin{proof} 

By lemma \ref{commute}, there are 8 event-operators on Hilbert space $\frakh_Z$, $ \hat{E}^a_i$, $ \hat{E}^{a'}_{i'} $, $\hat{E}^b_j $ and $\hat{E}^{b'}_{j'} $ 
for $i,i',j,j' = u,d$ such that  the operators in $A$ commute with the operators in $B$.  

 From these we can define, for each history-event-vector $|{E}^Z_k  \rangle$ in $\frakh_Z$ labelled by $k$, 
 the 16 vectors,
\begin{align}\label{jointvectors}
| ii'jj'k \rangle :=
\hat{E}^a_i \hat{E}^{a'}_{i'} \hat{E}^b_j \hat{E}^{b'}_{j'}    |{E}^Z_k  \rangle\,.
\end{align}
Using  these vectors we define a patched joint decoherence functional,
\begin{align}\label{jointdcf}
D^{patch}( ii'jj'k\,,\, \bar{i} {\bar{i}}'\bar{j}{\bar{j}}'\bar{k} ) &:=
\langle  ii'jj'k \, |\,  \bar{i} {\bar{i}}'\bar{j}{\bar{j}}'\bar{k} \rangle \,.
\end{align}
$D^{patch}$ depends on the  choice of ordering of the event-operators in the string in (\ref{jointvectors}). However, any ordering will work in the following calculation because the event-operators for events in $A$ commute  with the event-operators for events in $B$.\\

We now show that $D^{patch}$ has $D^{\alpha\beta Z}$, for each $\alpha\beta$, as marginals. For example, for $\alpha\beta = a b$,
\begin{align}\label{marginalsofpatch}
\sum_{ i' j'} | ii'jj'k \rangle &= \sum_{ i' j'} 
\hat{E}^a_i \hat{E}^{a'}_{i'} \hat{E}^b_j \hat{E}^{b'}_{j'}    |{E}^Z_k  \rangle\\
& = \hat{E}^{a}_i ( \hat{E}^{a'}_{u} + \hat{E}^{a'}_{d} ) \hat{E}^b_j (\hat{E}^{b'}_{u}  +\hat{E}^{b'}_{d} )   |{E}^Z_k  \rangle \label{eq:line34}\\
& = \hat{E}^a_i \,\mathbb{1} \, \hat{E}^b_j \, \mathbb{1} \,  |{E}^Z_k  \rangle \label{eq:line35}\\
& = \hat{E}^a_i  \hat{E}^b_j   |{E}^Z_k  \rangle\\
& =  | {E}^a_i {E}^b_j   {E}^Z_k  \rangle\,,
\end{align} 
where lemma \ref{beamops} is used to go from line (\ref{eq:line34}) to (\ref{eq:line35}).

Now, 
\begin{align}\label{jointdcfmarginals}
\sum_{i'{\bar{i}}' j'\bar{j}' } D^{patch}\left(ii'jj'k \,,\,  \bar{i} {\bar{i}}'\bar{j}{\bar{j}}'\bar{k}\right) &=\sum_{i'{\bar{i}}' j'\bar{j}' } 
\langle  ii'jj'k  \, |\, \bar{i} {\bar{i}}'\bar{j}{\bar{j}}'\bar{k}\rangle \\
&=  \langle   
        {E}^a_i   {E}^b_j  E^Z_k  
\, | \,  {E}^a_{\bar{i}} {E}^b_{\bar{j}} E^Z_{\bar{k}}     \rangle\\
& = D^{abZ}( ijk, {\bar{i}} {\bar{j}} {\bar{k}})\,,
\end{align}
where the last line follows from the definition of event-vectors (section \ref{sec:EHS}). 
A similar calculation shows that  $D^{patch}$ has $D^{\alpha\beta Z}$ as marginals for the 3 other global settings $\alpha\beta= ab', a'b, a'b'$. 

The fact that the set of  $D^{\alpha\beta Z}$'s  is a  PoZ model (Definition \ref{def:pozmodel}) of the decoherence functionals $D^{\alpha\beta}$ (\ref{fourdcfs}) implies that further marginalization over $k$ and $\bar{k}$ gives the 4 decoherence functionals (\ref{fourdcfs}). 
\end{proof}
There is a converse of sorts: 
\begin{theorem} (Lemma 4.2 in \cite{Craig:2006ny}) If there exists a joint decoherence functional, $D_{joint}$ on all the beam-events that has the four decoherence functionals (\ref{fourdcfs}) as marginals, then one can  construct, formally, a PoZ and LoN model for (\ref{fourdcfs}) that is quantum factorizable.
\end{theorem}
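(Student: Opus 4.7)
The plan is to reverse-engineer the proof of the forward Quantum Patching Theorem. First, I would construct a Hilbert space $\frakh_Z$ equipped with a universal vector $|\Omega\rangle$ and eight event-operators $\{\hat{E}^{\lambda}_c\}$, one for each pair (local setting $\lambda \in \{a,a',b,b'\}$, outcome $c \in \{u,d\}$), satisfying $\hat{E}^{\lambda}_u + \hat{E}^{\lambda}_d = \mathbb{1}$ for each $\lambda$, and $[\hat{E}^{\alpha}_i, \hat{E}^{\beta}_j] = 0$ whenever $\alpha \in \{a,a'\}$ and $\beta \in \{b,b'\}$, such that the inner products of the vectors $\hat{E}^a_i \hat{E}^{a'}_{i'} \hat{E}^b_j \hat{E}^{b'}_{j'} |\Omega\rangle$ reproduce $D_{joint}$. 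Such a representation can be produced from the GNS-like construction of Section \ref{sec:EHS} applied to the strongly positive $D_{joint}$, enlarging its event Hilbert space if needed to accommodate the eight commuting idempotents; the no-signalling compatibility of the four $D^{\alpha\beta}$ as marginals of $D_{joint}$ is precisely what makes the commuting structure consistent.

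Second, for each global setting $\alpha\beta$ I would define the measure theory $(\Omega^{\alpha\beta}, \EA^{\alpha\beta}, D^{\alpha\beta Z})$ by taking $\Omega^{\alpha\beta}$ to consist of triples $(\gamma_k, i, j)$ with $\gamma_k$ a history in region $Z$ (labelled by $k$) and $(i,j)$ the beam outcomes in $A$ under setting $\alpha$ and in $B$ under setting $\beta$. The event Hilbert space of $Z$ is declared to be $\frakh^{\alpha\beta}_Z := \frakh_Z$ with history-event-vectors $|E^Z_k\rangle$ providing a resolution $\sum_k |E^Z_k\rangle = |\Omega\rangle$, and the decoherence functional $D^{\alpha\beta Z}$ on product events is defined exactly as in the forward formula (\ref{jointdcf}),
\begin{align*}
D^{\alpha\beta Z}(E^\alpha_i E^\beta_j E^Z_k,\, E^\alpha_{\bar{i}} E^\beta_{\bar{j}} E^Z_{\bar{k}}) := \langle \hat{E}^\alpha_i \hat{E}^\beta_j E^Z_k \,|\, \hat{E}^\alpha_{\bar{i}} \hat{E}^\beta_{\bar{j}} E^Z_{\bar{k}} \rangle\,,
\end{align*}
and extended by linearity to the rest of $\EA^{\alpha\beta} \times \EA^{\alpha\beta}$.

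Finally I would verify the required properties. PoZ holds because beam-event-vectors are produced by the action of honest linear operators on $\frakh_Z$, so any linear dependence among event-vectors $|F\rangle$ is automatically preserved under $|F\rangle \mapsto \hat{E}|F\rangle = |EF\rangle$. LoN holds because $\frakh^{\alpha\beta}_{Z+A+B} = \frakh^{\alpha\beta}_Z = \frakh_Z$ by construction. Agreement on shared subregions between theories sharing a local setting follows from the uniformity of the operators $\hat{E}^\lambda_c$ across the four global settings. Quantum factorizability (\ref{qfactor}) follows from the commutation of $A$- and $B$-operators combined with the operator-from-vector structure of the decoherence functional. The marginal property follows by summing out the unused local settings using Lemma \ref{beamops}, in exact analogy with (\ref{marginalsofpatch}). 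The main obstacle is the first step: producing a dilation of $D_{joint}$ into commuting event-operators acting on a single vector $|\Omega\rangle$. This is a nontrivial representation-theoretic task for which strong positivity and the no-signalling marginal structure are jointly sufficient, but whose outcome need not correspond to any physically realized unitary system; this is why the theorem statement includes the qualification \emph{formally}.
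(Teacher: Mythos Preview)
Your approach is in the right spirit but is significantly more complicated than necessary, and the complication introduces genuine gaps. The paper's proof is entirely elementary and explicit: it takes the region $Z$ to contain exactly sixteen history-events, labelled by four-bit strings $II'JJ'$ (one bit per local setting), and defines the four decoherence functionals by
\[
D^{abZ}(ij\,II'JJ',\,\bar{i}\bar{j}\,\bar{I}\bar{I}'\bar{J}\bar{J}') := \delta_{iI}\,\delta_{\bar{i}\bar{I}}\,\delta_{jJ}\,\delta_{\bar{j}\bar{J}}\, D_{joint}(II'JJ',\,\bar{I}\bar{I}'\bar{J}\bar{J}')\,,
\]
and analogously for the other three global settings. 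In other words, the $Z$-history is a deterministic ``hidden variable'' that simply records all four would-be outcomes, and the beam-events are read off from the appropriate bits. With this choice, the event Hilbert space $\frakh_Z$ is (by construction) the event Hilbert space of $D_{joint}$ itself, the event-operators act diagonally as $\hat{E}^a_i|II'JJ'\rangle=\delta_{iI}|II'JJ'\rangle$, and all of PoZ, LoN, agreement, the marginal property, and quantum factorizability become short verifications using the Kronecker deltas.

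Your ``main obstacle'' --- dilating $D_{joint}$ to a Hilbert space with eight commuting idempotents --- is thus not an obstacle at all: the sixteen $Z$-histories \emph{are} the dilation, and the operators commute because they are simultaneously diagonal. You never specify what your $Z$-histories $\gamma_k$ actually are, which is precisely the missing ingredient. Moreover, your claim that quantum factorizability (\ref{qfactor}) follows merely from commutation of the $A$- and $B$-operators is not correct in general: for arbitrary commuting operators one does not have $\langle\hat{E}_A\hat{E}_B v\,|\,\hat{E}_{\bar A}\hat{E}_{\bar B} w\rangle\,\langle v|w\rangle=\langle\hat{E}_A v\,|\,\hat{E}_{\bar A} w\rangle\,\langle\hat{E}_B v\,|\,\hat{E}_{\bar B} w\rangle$. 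In the paper's construction factorizability holds because each event-operator multiplies each $Z$-history-vector by $0$ or $1$, so that both sides reduce to the same product of deltas times $D_{joint}(k,\bar{k})^2$. The deterministic (delta-function) structure is doing the work, not commutation alone.
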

\begin{proof}  We prove this by explicit construction. By assumption there exists a decoherence functional $D_{joint}\left(  ii'jj',\, \bar{i} {\bar{i}}'\bar{j}{\bar{j}}'\right) $ 
with 
(\ref{fourdcfs}) as marginals. We construct 4 measure theories, one for each global measurement, that agree in $Z$ where there are exactly 16 history-events. Each history-event in $Z$ is labelled by a 4-bit string: $\{  I I' J J' \, |\, I , I', J, J' = u, d\}$. These history-events are posited formally and the 4 $\{u,d\}$ bits are just labels and do not imply that anything in $Z$ is ``up'' or ``down''. In the decoherence functionals for the 4 measure theories,  the relevant beam-events in $A$ and $B$ are simply determined by the corresponding bit values in the past history-event in $Z$ as follows:
\begin{align}
 D^{abZ}(ij  I I' J J' \,,\, \bar{i}\bar{j}\bar{I}\bar{I}' \bar{J} \bar{ J}') &:=
  \delta_{iI}\delta_{\bar{i}\bar{I}}  \delta_{jJ}\delta_{\bar{j}\bar{J}}
  D_{joint}\left( II'JJ' ,\,\bar{I} {\bar{I}}'\bar{J}{\bar{J}}'\right) \\
 D^{ab'Z}(ij' I I' J J' ,\, \bar{i}\bar{j}'\bar{I}\bar{I}' \bar{J} \bar{ J}') &:= 
 \delta_{iI}\delta_{\bar{i}\bar{I}}  \delta_{j'J'}\delta_{\bar{j}'\bar{J}'}
  D_{joint}\left( II'JJ' ,\,\bar{I} {\bar{I}}'\bar{J}{\bar{J}}'\right)\\
 D^{a'bZ}(i'j  I I' J J' ,\, \bar{i}'\bar{j}\bar{I}\bar{I}' \bar{J} \bar{ J}') &:= 
  \delta_{i'I'}\delta_{\bar{i}'\bar{I}'}  \delta_{jJ}\delta_{\bar{j}\bar{J}}
  D_{joint}\left( II'JJ' ,\,\bar{I} {\bar{I}}'\bar{J}{\bar{J}}'\right)\\
D^{a'b'Z}(i'j'  I I' J J' ,\, \bar{i}'\bar{j}'\bar{I}\bar{I}' \bar{J} \bar{ J}') &:= 
  \delta_{i'I'}\delta_{\bar{i}'\bar{I}'}  \delta_{j'J'}\delta_{\bar{j}'\bar{J}'}
D_{joint}\left( II'JJ' ,\,\bar{I} {\bar{I}}'\bar{J}{\bar{J}}'\right)\,.
\end{align}

It can be verified that this is a PoZ model of the the four decoherence functionals (\ref{fourdcfs}) and that the
 decoherence functional for each global setting satisfies LoN and is factorizable (\ref{qfactor}). 
\end{proof}

When $i,i',j,j'$ refer to  beam-events that are detection events, the decoherence functionals  in  (\ref{fourdcfs}) will be diagonal and the diagonal elements will be the experimental probabilities (\ref{expprobs}).  The existence of a joint decoherence functional on all the beam
events then implies the Tsirel'son inequalities for the experimental probabilities \cite{Craig:2006ny} and
indeed the stronger condition known as $Q^1$ \cite{Dowker:2013tva} in
the Navascues-Pironio-Acin hierarchy
\cite{Navascues:2008}.

\medskip
In the remaining sections of this paper, we adopt a less formal tone and
explore 
the implications of 
{\rc}
in relation to 
spacelike correlations, following which we
consider further how the conditions of PoZ and LoN relate to {\rc}
and other foundational notions like locality, time-asymmetry,
and logic.
We begin by arguing that unadorned {\rc} has very little to say on the
matter of which correlations can occur in nature and which cannot.

\section{Nonlocality rescues causality!} \label{sec:rescue}
Tradition has it that the observed violation of the Bell/CHSH
inequalities implies that some sort superluminal causation is taking
place; and there are purely logical antinomies 
(e.g. Kochen-Specker-Stairs \cite{Stairs:1983}, Greenberger-Horne-Zeilinger (GHZ) \cite{Greenburger:1989}, Hardy \cite{Hardy:1992})
that are even more persuasive in this regard.  In all of these
instances the evidence for the alleged superluminality consists of
certain spacelike correlations, either probabilistic or deterministic.
Against tradition, however, we maintain that 
\textit{none of these correlations entails superluminal causation.}  
And this is true independently of
whether the context is classical or quantum.

Why then do so many people\footnote
{Representative authors are Albert Einstein, John Bell \cite{Bell:1990}, and Travis
 Norsen in his otherwise very clear and careful exposition \cite{doi:10.1119/1.3630940}.  (Similarly, some of the philosophical literature: e.g. \cite{Butterfield:1992, Butterfield:1994}.) Bell and
 Norsen do not seem to have specified the ``beables'' they had in mind;
 but Einstein (who presumably was unaware of histories-formulations)
 seems to have been taking the wave-function $\psi$ as the quantum
 model of reality, and pointing at its remote ``collapse'' as the ``spooky
 distant-action''  \cite{born1971born} he was opposing.}
feel that certain kinds of correlations 
among events in mutually spacelike regions of spacetime
{\it\/do\/} conflict with {\rc}?
We think this comes about because they are taking for granted that
causes can only act locally in spacetime, 
with the result that they slide from
{\rc} per se to
an enhanced condition like that to which John Bell gave the name `local causality'.
Let us focus instead on what {\rc} (RC) demands when taken alone,
unalloyed with any further requirement of locality or local causation.
What it wants to say is then very simple (albeit far from precisely
formalized): 
\textit{An event that happens in region $X$ cannot influence events in region $Y$ disjoint from Future($X$).}  

Now let $X$ be, for example, a spacetime region where a ``source'' emits
an entangled pair of particles.  The correlations at issue relate events
in region $A$ to events in region $B$, and both of these regions lie in the
future of the source.  The correlations themselves pertain to neither
region individually, but to their union, $Y = A \cup B$, which a 
fortiori also lies within Future($X$).  But for something that happens in
a region $X$ to cause something else to happen in that region's future in
no way conflicts with \rc.  The alleged contradiction disappears.

That is the whole argument, but perhaps 
some further comments would be helpful.
In order to feel fully at home with the above reasoning it's necessary
to grant a certain conceptual independence to events as such, as indeed
the framework of QMT does.\footnote
{A quote from Kolmogorov conveys this insight 
  [for ``elementary event'' read individual history]:
 ``the notion of an elementary event is an artificial superstructure
   imposed on the concrete notion of an event. In reality, events are
   not composed of elementary events, but elementary events originate
   in the dismemberment of composite events'' (for an English translation by R. Jeffrey of Kolmogorov's 1948 paper see \cite{Kolmogorov:1948}).}
A correlation between events in $A$ and events in $B$ is an event in its
own right, an event in {\AUB} not reducible to some event in $A$ together
with some other event in $B$.

What you must {\it\/not\/} think to
yourself is that $Z$ can cause such a correlation-event {\it\/only\/} by
separately inducing a particular $A$-event and a particular $B$-event.
If we are right, it is the unacknowledged (or only partly acknowledged)
embrace of this intuition of locality that creates the apparent conflict
between quantum mechanics and relativistic causality.  It thus seems
worthwhile to try to identify the minimum hypothesis (weaker than Bell's
local causality) that needs to be given up if one is to avoid the conflict.
We will state this hypothesis (or principle) in relation to our two
spacelike-separated regions, $A$ and $B$, although the formulation can
be extended in an obvious manner to any set of disjoint spacetime-regions.

\begin{definition} [Principle of Sheafy Causation] 
It asserts the following.  A cause can influence an
event in region $A\cup B$ only by {\it\/influencing separately\/}
events in $A$ and events in $B$:
a cause's effect/action in $A \cup B$ is {\it\/fully given\/} by its
effect/action in $A$ together with its effect/action in $B$.
\end{definition}
\noindent 
Adopting a turn of phrase popular in the philosophical literature, one
might say that according to this principle, a cause's action in {\AUB}
``{supervenes on}'' its action in $A$ and its action in $B$.
(We have chosen the adjective ``sheafy'' because a mathematical sheaf is the
paradigmatic object whose global properties supervene on its local
properties.)

\subsection{Dispelling the supposed paradox}

We can illustrate how sheafy causation leads to a paradox by adopting it
provisionally, and then reasoning about a minimal fragment of the EPRB
Gedankenexperiment, the fragment concerning perfect correlations.

Let there be two Stern-Gerlach analyzers with fixed settings so aligned
as to produce a perfect correlation between the respective beams ($u$ or
$d$) in which the $A$- and $B$-particles emerge.  Following a suitable
preparation-event in region $X\subseteq Z$, the $A$-particle emerges from its
analyzer in the upper beam if and only if the $B$-particle does the same.
In other words, the event ``$uu$ or $dd$'' (which we will henceforth
write as $uu\!+\!dd$ or $uu\!\cup\!dd$) must happen.

Now causality insists that particle-$A$ cannot learn what particle-$B$
is doing, and particle-$B$ cannot learn what particle-$A$ is doing.
Therefore both particles must know in advance whether they will choose
their `$u$' beam or their `$d$' beam.  Hence the source or preparation
must have {\it\/pre-determined\/} both of these separate ``choices''.
(Or so it seems!)

It is this supposed predetermination of the outcomes, $uu$, $dd$, $ud$,
or $du$, that is the source of the paradoxes (basically because it
authorizes the introduction of ``hidden variables'' $\lambda$ that bear
the information about which outcome has been determined to happen in any
particular run.)  To undermine the reasoning that seemed to lead to
deterministic beam-events is thus to dispel the paradoxes.

What, then, was wrong with the reasoning we just rehearsed?  
The fallacy, as we have already indicated, 
was to have conflated causation per se with sheafy causation:
to have ignored that 
events 
in $X$ can cause {\it\/any\/} event in $X$'s future,
without necessarily being the cause of any other event.
In particular they can cause $uu+dd$ to happen, without needing to cause
either $uu$ or $dd$.
That is, 
a cause
need not (and in this example does not)
influence the particles individually, but only jointly.

If we accepted the principle of sheafy causation, we would have to deny
this possibility. Instead, we would infer that in order to force
$uu+dd$ to happen a cause would either have to force $u$-left and
$u$-right to happen (and thus force $uu$ to happen) or else force $dd$
to happen.  In some experimental runs, the fine details of the
preparation-event would deterministically produce the $u$-event in each
wing, while in other runs they would produce $d$ in each wing.

It might be helpful to express our view of 
the situation
in negative language. The
preparation-event has {\it\/prevented\/} the events $ud$ and $du$ from
happening.  Beyond that it has done nothing, having had in particular
zero influence on $A$ and zero influence on $B$.  It has caused a
correlation, but no more than that.\footnote%
{One might challenge the words, ``zero influence'', by claiming that the
  preparation ``caused events $u$-left and $d$-left in $A$ to be equally
  likely'', and similarly for region $B$.  However, even if one accepts
  this as a causal influence, it does nothing toward producing the
  \textit{correlation} between $u$-left and $u$-right.}

\medskip
\noindent
{\bf Note}. An event like $uu+dd$ is a logical (Boolean) combination of events
in $A$ and events in $B$: it ``supervenes on'' these events.  
This is indeed a species of locality, but it is
purely kinematical, while the crucial nonlocality
is dynamical. The {\it\/causal influence\/} which the $X$-event exerts
in region {\AUB} does {\it\/not\/} supervene on its influences in
regions $A$ and $B$ separately.

Perhaps we should also clarify here that by focussing on the correlation
event $uu+dd$ we do not mean to endorse an assertion like ``Neither $uu$
nor $dd$ happens in nature, but only $uu+dd$''.  Such a claim would not
make sense, given that both $uu$ and $dd$ are macroscopic events.
The present paper revolves around questions of cause, locality, and
logic.  We do not address the measurement-problem, which we would view
as the task of explaining why macro-reality can be identified with a
single (macro-)history, even 
if
micro-reality cannot.  Empirically
however, this is a fact, which can also be expressed by saying that
macro-events follow classical logic.  To explain this fact is a task for
``Quantum Foundations'', but if we take it as given, then it follows at
the macroscopic level that $uu+dd$ happens $\implies$ either $uu$
happens or $dd$ happens.  However, this still doesn't imply that the
preparation-event \textit{causes} $uu$ or \textit{\/causes\/} $dd$.  What it
causes is still just $uu+dd$.\footnote%
{In observing that reality is described by a single history at the
  macroscopic level, we are not claiming the same about microscopic
  reality.  That the course of microscopic events involving a given
  particle could correspond to a single worldline of that particle would
  contradict \RC as we understand it, as illustrated by the purely
  logical cousins of the EPR paradox.
 Nor do we mean to imply that the particle detectors in the $u$ or $d$
 beams ``only reveal'' the locations of the particles they are
 detecting. Nor do we mean to imply that they don't!}

\subsection{How does the path-integral explain the perfect correlations?}

If some of these these explanations seem unduly abstract or slippery, it
might help to go through the path-integral calculation presented in
\cite{Sinha:1991cj} that shows in detail how the correlations come
about.  One sees in particular how the precluded event $ud$ acquires a
net amplitude of zero.
As one will readily observe, the calculation is {global in nature},
because the amplitudes that enter into it are themselves global in
nature. (They are functions of histories.)  
Within a path-integral framework, the fact that a correlation is an
essentially nonlocal effect is clearly visible in the computation that
one performs to deduce the correlation.  One sees concretely how 
the preparation-event exerts its causal influence
globally without doing so locally.  Event $u$ at $A$ acquires
a positive quantum measure $\mu(u)>0$, as does event $d$ at $B$;  but the
measure of the intersection $ud$  (their conjunction) vanishes.

In order to avoid a possible confusion, we should mention here that the
setup in \cite{Sinha:1991cj} differs from that discussed in
Section \ref{sec:patch} in one important respect.  The beam-events
considered in Section \ref{sec:patch} were macroscopic
instrument-events: the registering of the presence of one or more
particles by one or more detectors placed in the corresponding beams.
In contrast, the computation in \cite{Sinha:1991cj} did not include
detectors, and indeed probably no one has attempted something like that
with realistic detector-models.\footnote%
{It's interesting that realistic source-models  are
   much easier to devise.  In the arrangement of \cite{Sinha:1991cj}
   a single mirror (or beam-splitter) suffices to ``entangle'' 
   a pair of photons with each other.  The design takes
   advantage of the photons' bosonic statistics, and it requires no
   nonlinearity of the kind involved in parametric down-conversion.}
Rather, the events whose measures $\mu$
were computed in \cite{Sinha:1991cj} were the corresponding beam-events
without detectors present.

This simplification is always made in practice when people compute
observational probabilities, but of course it ultimately needs to be
justified --- something which will only be achieved fully when the
so-called measurement problem has been solved.  Pending that, we can
perhaps be content with: (1) the assumption (or widespread conviction)
that if the computation could be done, 
the measure $\mu$ of the particles emerging in certain beams without
detectors present, would equal the measure of detectors placed in the
same beams registering the particles' presence;
together with 
(2) the rule of thumb that the measure of an instrument-event can be
interpreted as a probability in the sense of a relative frequency.
Together these are equivalent to the Born Rule.

\subsection{For nonlocality}
Not too many years ago, drawing a distinction between ``local
causality'' and causality per se might have seemed to be splitting
hairs, but now physicists possess many reasons to take a fundamental
nonlocality seriously; and most of these reasons have nothing to do with
the Bell inequalities.  One can mention here the puzzle of the
cosmological constant $\Lambda$; the continuing interest in nonlocal
field theories, non-commutative geometry, and twistors; and the fact
that (as illustrated by causal sets) a spatio-temporal discreteness can
be combined with Lorentz invariance only by accepting a radical
nonlocality.  Indeed, if spacetime is ultimately discrete, then locality
will largely lose its meaning simply because the concept of
infinitesimal neighborhood of a point will no longer be available.
But even a relatively limited amount of nonlocality would erase any
difference of principle between causing an event in a small neighborhood
of a spacetime point and causing an event in a much larger region,
and this in turn would suffice to undermine the ``sheafy'' reduction of
a causal influence on the amalgamated region $A+B$ to separate
influences on the constituent regions $A$ and $B$.

\subsection {Correlations involving multiple instrument-settings}
The discussion above pertains to analyzers whose settings are fixed,
whereas (as with patching) the Bell inequalities and the
gedankenexperiments relating to superluminality all require the
consideration of multiple settings.  
However, 
the case of variable settings is more general in appearance only. It 
can be handled in the same way as the fixed case 
if one treats the settings as
the dynamical events they actually are (so enlarging the history-space and event-algebra to
include the instruments and their histories).
In place of a
correlation event like `$uu+dd$',
one now puts an event 
like \hbox{`$S\to(uu+dd)$'}, 
where $S$ denotes a setting-event,
$u$ and $d$ denote beam-events when the Stern-Gerlachs have been oriented by $S$,
and $\to$ denotes the Boolean operation of so-called
``material implication'', defined by 
\begin{equation}
   E \to{F} := (\Omega\backslash E) \cup F.
\end{equation}
Instead of saying that the preparation event in region $X\subseteq Z$
causes in region $\AUB$ the event, $uu+dd$, to happen, one now says that
it causes the event, $S\to(uu+dd)$, to happen.\footnote%
{Made explicit as a set of histories, this event is $(\Omega\backslash S)\cup uu \cup dd$.
 In words these are the histories such that either the correlation event happens or the settings are not those of $S$.}
In this manner, perfect correlations involving multiple instrument-settings can be handled exactly as above.
In particular this covers the case of the EPRB setup with variable, but matched, settings of the analyzers.

For a more generic example, consider the correlations of the
Popescu-Rohrlich-boxes \cite{Popescu:1994}, which can be expressed as
follows.
Let $E_{PR}$ be the event\footnote%
{In words: If the joint setting is 1-1, 1-2, or 2-1 then the beam-events
 are perfectly correlated, and if the joint setting is 2-2 then the
 beam-events are perfectly anti-correlated.}
\be
  \big[(S_1 S_1 \cup S_1 S_2 \cup S_2 S_1) \; \to \; (uu \cup dd)\big] \; \cap\; \big[S_2 S_2 \,\to\, (ud \cup du)\big]
\ee
where $S_1$ and $S_2$ are settings, 
$u$ and $d$ are beam-events as before, 
and where for instance
$S_2 S_2$ denotes the event ``setting $S_2$ at both $A$ and $B$''.
The causal influence in this case can be expressed by saying that the
preparation event $P$ causes event $E_{PR}$ to happen.
[Alternatively, one could say that it causes four different events to happen, namely the events,
$S_1 S_1\to (uu \cup dd)$,  $S_1 S_2\to (uu \cup dd)$, $S_2 S_1\to (uu \cup dd)$, and $S_2 S_2\to (ud \cup du)$.]

As a final example (one which is more amenable to experiment), 
consider the 3-beam GHZ correlations \cite{Greenburger:1989}, 
which are encapsulated in the event, 
\begin{align*}
 E_{GHZ}& = \big[ \left( S_x S_y S_y  \cup S_y S_x S_y \cup S_y S_y S_x \right) \; \to \; \left( uud   \cup udu \cup duu\cup ddd \right)
 \big] \\
 & \quad\quad \quad  \cap\; \big[  S_x S_x S_x  \; \to \; \left( ddu   \cup dud \cup udd \cup uuu \right)
 \big] 
\end{align*} 
where $S_x$ and $S_y$ are again settings.
In this case, the causal influence would be expressed by saying that the preparation event $P$ caused $E_{GHZ}$ to happen.

In these examples one is still dealing with perfect correlations, but
there's no need to go further if one's interest is in the conflict
between {\rc} and locality.  Indeed the perfect GHZ correlations, for
instance, are more trenchant in that respect than the merely
probabilistic correlations involved in the CHSH/Bell inequalities.
Nevertheless, the Bell inequalities are still the most relevant to
accomplished experiments, and one can ask how the discussion of this
section would look in relation to them.
More generally, how should {\rc} be conceived in the context of
probabilistic correlations?

In that context,
one is dealing with a broader and less transparent concept that one might
term ``stochastic causation'', and it seems clear that cause-effect
implications of a straightforward logical nature no longer suffice.
Instead of statements like ``$P$ causes $Q$'', it seems that one would need
to make sense of locutions like ``$P$ causes $Q$ with probability $p$'',
or (more obscurely) ``$P$ causes $Q$ to have the probability $p$'', or
perhaps even ``a string of repetitions of $P$ causes the frequency of
outcome $Q$ to be $p$.''
But these matters concern stochastic causation and probability as such,
and are not really germane in the present context.
(Recall here that because the setting- and beam-events under discussion are
macroscopic instrument-events, quantal interference is absent, whence one can employ
ordinary (``homomorphic'') logic and ordinary probability theory in
reasoning about them.)

\section{Why are certain correlations \textit{not} seen?} \label{sec:separability}
The message from the preceding section is that
the principle of relativistic causality imposes no
limitation on the correlations that a localized cause can induce among
events in its future.  In particular, \rc {} is perfectly
compatible with the spacelike correlations that have often been supposed
to contradict it 
(up to and including hypothetical ``signalling correlations'').
Conflicts arise when you supplement {\rc} with sheafy causation,
or with some still stronger condition like continuous propagation of
cause-effect chains in spacetime (Bell's Local Causality). 
But experience teaches that these stronger principles are frequently
violated in the quantal world, and therefore must be given up.

If this were the end of the story, however, then one might expect to have
observed in nature correlations much stronger than those discovered so far, 
and in particular stronger than provided for in established quantum theories,
which respect constraints like ``no-signalling'' and the Tsirel'son
inequalities (the simplest of the latter being CHSH with $2\to 2\sqrt{2}$ \cite{Tsirelson:1980,Tsirelson:1985}.)
This suggests that some other principle beyond bare {\rc} is active in
nature, and that it might be encoded in certain structural features
of the quantum measure which lie at the base of more phenomenal
regularities like the ``patching property'' and its consequences for
correlations, like the ``no-signalling'' equalities.
If the explanation for these regularities is indeed some hitherto
unrecognized structural principle governing the decoherence functional, then
discovering what it is would not only help to illuminate our current theories,
but it might usefully guide the search for new theories, especially theories
of quantum gravity.
Without really knowing how to frame such a principle, we will as a
placeholder give it the name of {\it\/\caussep\/},
and try to indicate how PoZ
might be a step in its direction.

In ordinary quantum mechanics, one  proves the Tsirel'son
inequalities by assuming that the correlators they relate can be expressed
as expectation values of products of projection operators.  In quantum
measure theory, 
as already mentioned, 
the same inequalities 
follow
from the hypothesis of a joint quantum measure, 
the analog of a joint probability measure in Fine's theorems.  
That is,
they can be derived from ``quantal patching''.  
A would-be principle of \/\caussep\/ would thus want to provide a basis for quantal patching.

In the  ``decoherent histories'' and ``consistent histories'' 
interpretations of quantum mechanics \cite{Griffiths:1984rx, Omnes:1988ek,Gell-Mann:1989nu, Hartle:1989})
the decoherence functional is commonly \textit{defined} in
terms of sequences of projection operators in a fixed Hilbert space.
If we could assume that this were its most general form,
then patching would follow rather simply,
but that assumption is not
tenable in the context of path-integrals, let alone in quantum gravity.
Fortunately it is not needed either, because one can appeal 
instead to event-operators, as 
we have seen
in Section \ref{sec:patch} of this paper.

But this in turn makes us ask what kind of condition would ensure that 
the required event-operators will be available?  
As detailed above, 
one possible answer, 
is that such a condition is PoZ, 
or rather PoZ supplemented by what we have termed Lack of Novelty (LoN).
In light of this service that PoZ provides to quantal patching, 
one can see
it as a species of causality principle, 
one that 
(thanks to the ``slicing freedom'' inherent in a Lorentzian temporal structure)
is able to play a similar role quantum mechanically 
to what Bell's Local Causality played in the derivation of the CHSH inequalities,
or to what the Principle of Sheafy Causation plays in the derivation of
deterministic hidden variables from the perfect correlations of the original EPR paper.
(In relation to CHSH one has schematically
that,
classically: 
  Local Causality $\to$ Factorizability $\to$ classical patching $\to$ CHSH;
and quantally:
  PoZ + LoN $\to$ event-operators $\to$  quantal patching $\to$ Tsirel'son.)
Moreover, quantal patching also ensures, essentially by definition,
that the resulting correlations will satisfy the condition on the
marginal probabilities that goes by the name of ``no-signalling''.
This, then, is one  way in which PoZ relates to ``\/\caussep\/''.  

But PoZ has other features, too, that make contact with our causal
intuitions.  First and foremost 
(and in contrast to other ``causality principles'' like spacelike commutativity) 
it 
manifests
the essential
time-asymmetry inherent in the causality-concept: the time-reverse of
PoZ is a condition that will almost never be satisfied!

Moreover, the particular manner in which PoZ provides this ``arrow of time'',
relates to irreversibility and the ``stability of the past'',
because
it can be read as saying that a certain kind of ``property of the
past'' cannot be undone in the future.
The preclusion of a past event is 
such a property,
and although the full statement of PoZ goes beyond simple event-vectors to
linear combinations of them, one can perhaps regard the vanishing of
a sum like that on the LHS of (\ref{eq:pozero}) as also being a
property of the past.

Finally, we should comment on the possibility that certain
causality-principles could retain their heuristic value for theories in
which the metric, and therefore the temporal/causal structure of
spacetime, becomes dynamical, in other words for theories of quantum
gravity.  In that treacherous soil, it seems very possible that none of
our cherished causality principles will be able to take root.
However, it's worth mentioning that for decoherence functionals defined
on the event-algebra of labeled causal sets there does exist a natural
analog of the PoZ condition.

\section{Acknowledgments}

For stimulating questions touching the topic of this paper,
   RDS would like to thank participants in the ``Tonyfest"
   symposium, held 3 February 2019, at the Raman Research
   Institute in Bengaluru.  
 FD  acknowledges the support of 
 the Leverhulme/Royal Society interdisciplinary  APEX grant APX/R1/180098.  
 FD is supported in part by STFC grant ST/P000762/1. 
Research at Perimeter Institute is supported by the Government of Canada through
Industry Canada and by the Province of Ontario through the Ministry of Economic Development and Innovation. 
RDS is supported in part by NSERC through grant RGPIN-418709-2012.

\appendix
\section{Appendix}\label{appendix}

Consider  a unitary local quantum field theory on globally hyperbolic
spacetime $M$. Let the field be $\Phi$ and let $\Sigma_i$ be some Cauchy
surface in the past on which there is an initial state $\psi$ which is a
set of amplitudes $\psi[\varphi]$ for each spatial configuration
$\varphi = \Phi|_{\Sigma_i}$ on $\Sigma_i$.  
The decoherence functional for events $E$ and $\bar{E}$ in a spacetime
region $R$ between the initial Cauchy surface $\Sigma_i$ and a final
Cauchy surface $\Sigma_f$ is given by a double path integral of
Schwinger-Keldysh type: 
\begin{align} \label{skform} 
D(E, \bar{E}) = \int_{\Phi\in E, \bar{\Phi} \in \bar{E}}\, {\cal{D}}\Phi {\cal{D}}\bar{\Phi} \,\,
\psi\left[\Phi |_{\Sigma_i}\right]^* \,  \psi\left[\bar{\Phi} |_{\Sigma_i}\right] \,\,
e^{ -i S\left[\Phi \right] + i S\left[\bar{\Phi}\right]} \,\, \delta\left[ \Phi |_{\Sigma_f} -  \bar{\Phi} |_{\Sigma_f} \right] \,,
 \end{align}
where the delta functional enforces the condition that the two histories $\Phi$ and $\bar\Phi$ are equal on the final 
truncation surface
$\Sigma_f$. This path integral may also be thought of as a single
integral over what we call \textit{Schwinger histories}
\cite{doi:10.1063/1.1337730} consisting of the pair $(\Phi, \bar{\Phi})$
that agree on $\Sigma_f$. 

It is a property of unitary theories that the value of the decoherence
functional does not depend on the position of the truncation surface
$\Sigma_f$, so long as  $\Sigma_f$ is nowhere to the past of events $E$
and $\bar{E}$.  

Now consider regions $Z$, $A$ and $B$ as in the article, as shown in figure \ref{frog}. 
We denote the union of $Z$ and 
$A$ by $Z+A$, the union of $Z$ and 
$B$ by $Z+B$ and the union of $Z$, $A$  and 
$B$ by $Z+A+B$, as in the article. 
We denote the future boundary of a region  $X$ by $\partial^+ X$ and its past boundary by $\partial^- X$.

\begin{theorem} Consider the unitary quantum field theory with
  decoherence functional as described above. Let  regions $Z$, $A$ and
  $B$ be as in figure \ref{frog}. Let $E_A$ and $\bar{E}_A$ be events in
  $A$, $E_B$ and $\bar{E}_B$ be events in $B$ and let  $E_{\gamma_Z}$ and
  $E_{\bar{\gamma}_Z}$ be history-events in $Z$. Then    
\begin{align}\label{qfactorize}
D(E_A E_B  E_{\gamma_Z},\bar{E}_A \bar{E}_B  E_{\bar{\gamma}_Z}) \, D(E_{\gamma_Z}, E_{\bar{\gamma}_Z}) = D(E_A  E_{\gamma_Z}, \bar{E}_A  E_{\bar{\gamma}_Z})\,  D(E_B  E_{\gamma_Z}, \bar{E}_B  E_{\bar{\gamma}_Z}) \,.
 \end{align}
\end{theorem}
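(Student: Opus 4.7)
The plan is to evaluate all four decoherence functionals entering (\ref{qfactorize}) directly from the Schwinger--Keldysh formula (\ref{skform}) and to observe that both sides reduce to the same expression. Three inputs drive the argument: (i) locality of the action, $S[\Phi] = S_Z[\Phi|_Z] + S_A[\Phi|_A] + S_B[\Phi|_B]$ on the region $Z+A+B$; (ii) unitarity of the evolution, which lets one slide the truncation surface $\Sigma_f$ down onto any hypersurface that is nowhere to the past of the events being evaluated; and (iii) the geometric fact that $A$ and $B$ are spacelike and both contained in the future domain of dependence of the past set $Z$, so their closures share no Cauchy-data boundary.

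First I would push $\Sigma_f$ down to $\partial^+(Z+A+B)$, which decomposes into three disjoint pieces $W \cup \partial^+A \cup \partial^+B$ with $W := \partial^+Z \setminus (\partial^-A \cup \partial^-B)$. By unitarity of the evolution above $\partial^+(Z+A+B)$, the delta functional $\delta[\Phi|_{\Sigma_f} - \bar\Phi|_{\Sigma_f}]$ becomes a product of three independent delta functionals on $W$, $\partial^+A$, and $\partial^+B$. Since $E_{\gamma_Z}$ and $E_{\bar\gamma_Z}$ fix $\Phi|_Z = \gamma_Z$ and $\bar\Phi|_Z = \bar\gamma_Z$ in every one of the four decoherence functionals, the $W$-piece produces a common factor $\delta[\gamma_Z|_W - \bar\gamma_Z|_W]$, and locality of the action together with the fixed $Z$-data peels off a common prefactor
\[ \alpha := \psi^*[\gamma_Z|_{\Sigma_i}]\, \psi[\bar\gamma_Z|_{\Sigma_i}]\, e^{-iS_Z[\gamma_Z] + iS_Z[\bar\gamma_Z]}\, \delta[\gamma_Z|_W - \bar\gamma_Z|_W]. \]
What remains factors, because $A$ and $B$ share no common boundary, into an $A$-integral $G_A(E_A,\bar E_A)$ (with past data $\gamma_Z|_{\partial^-A}, \bar\gamma_Z|_{\partial^-A}$ and a matching delta on $\partial^+A$) times an analogous $B$-integral $G_B(E_B,\bar E_B)$. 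Dropping the event-constraint on $A$ collapses $G_A$ by unitarity of the $A$-evolution to $\delta_A^Z := \delta[\gamma_Z|_{\partial^-A} - \bar\gamma_Z|_{\partial^-A}]$, and similarly for $B$. Assembling the pieces gives
\begin{align*}
D(E_{\gamma_Z}, E_{\bar\gamma_Z}) &= \alpha\, \delta_A^Z\, \delta_B^Z, \\
D(E_A E_{\gamma_Z}, \bar E_A E_{\bar\gamma_Z}) &= \alpha\, G_A(E_A,\bar E_A)\, \delta_B^Z, \\
D(E_B E_{\gamma_Z}, \bar E_B E_{\bar\gamma_Z}) &= \alpha\, \delta_A^Z\, G_B(E_B,\bar E_B), \\
D(E_A E_B E_{\gamma_Z}, \bar E_A \bar E_B E_{\bar\gamma_Z}) &= \alpha\, G_A(E_A,\bar E_A)\, G_B(E_B,\bar E_B),
\end{align*}
from which both sides of (\ref{qfactorize}) are seen by inspection to equal $\alpha^2\, G_A(E_A,\bar E_A)\, G_B(E_B,\bar E_B)\, \delta_A^Z\, \delta_B^Z$.

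The main obstacle is not the final algebra, which is pure bookkeeping once the decomposition is in place, but rather making rigorous the repeated ``slide $\Sigma_f$ down by unitarity'' step and the attendant manipulation of functional delta-distributions on piecewise-characteristic hypersurfaces such as $\partial^+(Z+A+B)$. As with all Schwinger--Keldysh identities in continuum QFT, the argument above should be read as formal at the level of the path integral; a rigorous statement would need to be carried out in a regulated theory (for instance on a causal set or lattice) where the regions are finite and the functional deltas reduce to honest Kronecker symbols.
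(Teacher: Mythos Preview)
Your proposal is correct and follows essentially the same route as the paper: compute each of the four decoherence functionals from the Schwinger--Keldysh double path integral, peel off the common $Z$-prefactor, and use locality of the action plus disjointness of $A$ and $B$ to factor the remaining integrals. The only organizational difference is that the paper slides the truncation surface separately for each of the four functionals (to $\partial^+Z$, $\partial^+(Z+A)$, $\partial^+(Z+B)$, $\partial^+(Z+A+B)$ respectively), whereas you keep a single truncation surface at $\partial^+(Z+A+B)$ and then invoke unitarity to collapse the unconstrained $A$- or $B$-integral to the boundary delta $\delta_A^Z$ or $\delta_B^Z$; these two manoeuvres are equivalent and your caveat about the formal nature of the manipulations matches the spirit of the paper.
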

\begin{proof}
Recall that a history-event  in $Z$ is  the cylinder set defined by a
single field configuration on $Z$ and let us refer to the field
configurations on $Z$ corresponding to history-events $E_{\gamma_Z}$ and
$E_{\bar{\gamma}_Z}$ as  
$\gamma$ and $\bar{\gamma}$ respectively. 

The portion of the initial surface $\Sigma_i$ that is not contained in
$Z$ can be ignored and, instead of initial and final Cauchy surfaces, we
have initial and final partial Cauchy surfaces: the initial surface is
$d := \partial^- Z \cap \Sigma_i$  and the final, truncation surface is
the future boundary of the relevant region, $Z$, $Z+A$, $Z+B$ or $Z+A+B$
for the decoherence functional in hand. See figure \ref{frog4}.  The
initial amplitudes $\psi$ are defined for spatial field configurations
on  the initial partial Cauchy surface, $d$.  

We consider the four decoherence functionals in the identity
(\ref{qfactorize}) in turn starting with the simplest, $ D(E_{\gamma_Z},
E_{\bar{\gamma}_Z})$. For this, there is no path integral at all because we
can choose the truncation surface to be the future boundary of $Z$: 
\begin{align}
D(E_{\gamma_Z}, E_{\bar{\gamma}_Z}) =  
           \psi\left[\gamma |_{d}\right]^* \,  \psi\left[\bar{\gamma}
             |_{d}\right] \,\, e^{ -i S\left[\gamma\right] + i
             S\left[\bar{\gamma}\right] } \,\,\delta\left[  \gamma
             |_{\partial^+Z} - \bar{\gamma} |_{\partial^+Z}\right]\,.        
 \end{align}

$\partial^+ Z$ can be partitioned into three: $a: = \partial^- A \cap \partial^+ Z $, 
$b: = \partial^- B \cap \partial^+ Z $ and the complement $c : = \partial^+ Z \setminus (a \cup b)$
as shown in fig \ref{frog4}. 
 \begin{figure}[h!]
\centering
{\includegraphics[scale=0.25]{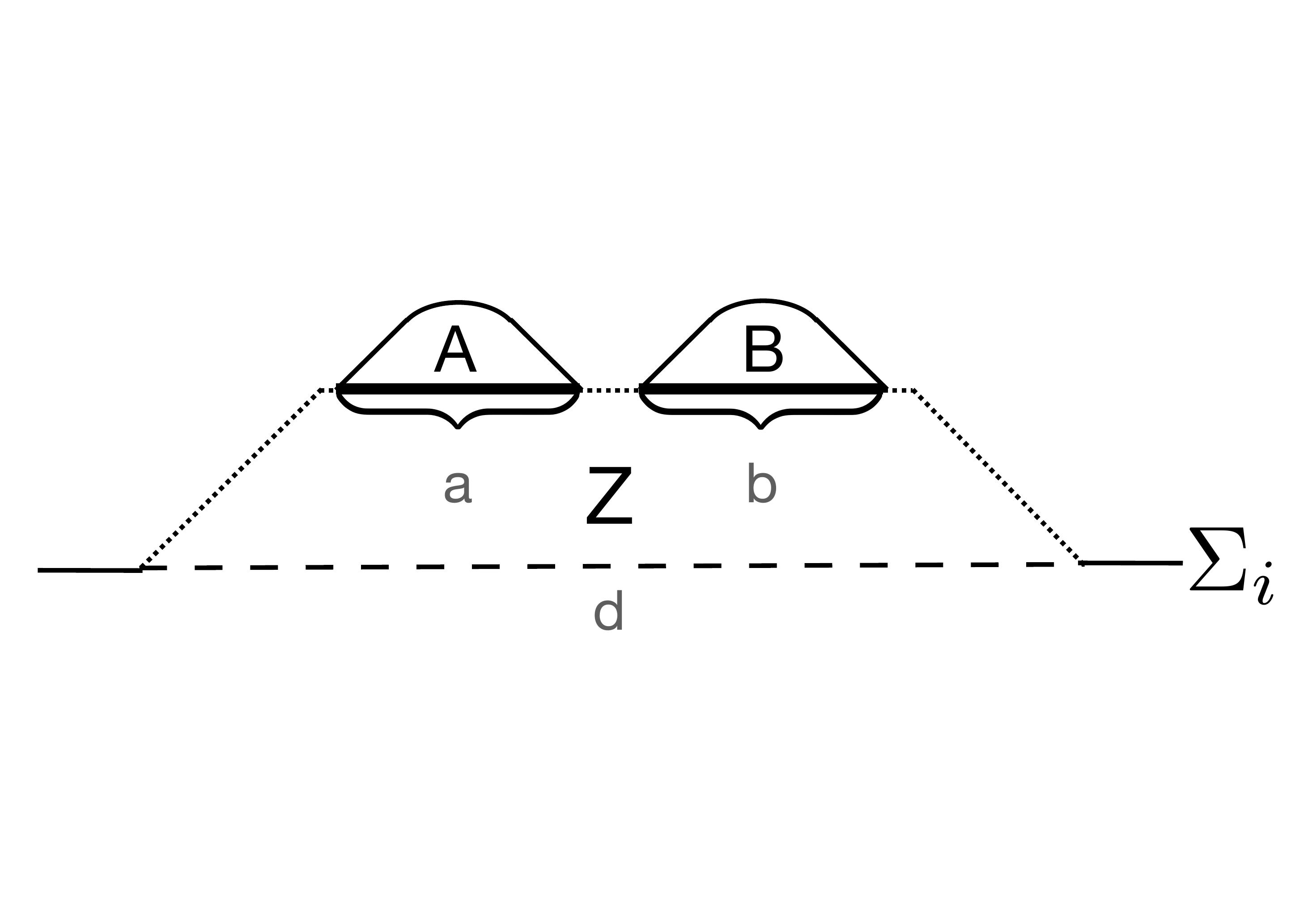}}
\caption{\label{frog4} The horizontal line $\Sigma_i$ is the initial Cauchy surface. The dashed portion of $\Sigma_i$ represents $d:= \partial^-Z\cap \Sigma_i$. The future boundary of $Z$ can be partitioned into 3 sets: 
$a: = \partial^- A \cap \partial^+ Z $, 
$b: = \partial^- B \cap \partial^+ Z $ and the complement $c : = \partial^+ Z \setminus (a \cup b)$.
$a$ and $b$ are shown as bold lines and $c$ is indicated by the dotted line.  }
\end{figure}
The delta function for the fields on $\partial^+ Z$ is  therefore a product of  3 delta functions for the field on $a$, $b$ and $c$: 
\begin{align}
\delta\left[ \gamma |_{\partial^+ Z} -  \bar{\gamma} |_{\partial^+ Z} \right]
= \delta\left[ \gamma |_{a} -  \bar{\gamma} |_{a} \right]\,\, \delta\left[ \gamma |_{b} -  \bar{\gamma} |_{b} \right]\,\, \delta\left[ \gamma |_{c} -  \bar{\gamma} |_{c} \right] \,.      
 \end{align}

Now consider

\begin{align}
&D(E_A E_{\gamma_Z}, \bar{E}_A E_{\bar{\gamma}_Z})  \\
& = \int_{\substack{\Phi\in E_A, \bar{\Phi} \in \bar{E}_A \\ \Phi |_Z = \gamma, \bar{\Phi} |_Z = \bar{\gamma}}} {\cal{D}}\Phi {\cal{D}}\bar{\Phi} \,
 \psi\left[\gamma |_{d}\right]^* \,  \psi\left[\bar{\gamma} |_{d}\right] 
 \,\, e^{ -i S\left[\Phi\right] + i S\left[\bar{\Phi}\right] } \,
\delta\left[  \Phi  |_{\partial^+(Z+A)} - \bar{\Phi} |_{\partial^+(Z+A)}\right]\\     
 & =    \psi\left[\gamma |_{d}\right]^* \,  \psi\left[\bar{\gamma} |_{d}\right] 
 \,\, e^{ -i S\left[\gamma\right] + i S\left[\bar{\gamma}\right] }  \delta\left[  \gamma  |_{c} - \bar{\gamma} |_{c}\right] \delta\left[  \gamma  |_{b} - \bar{\gamma} |_{b}\right]\\ & \quad \quad \quad \quad \quad \quad \quad \int_{\Phi\in E_A, \bar{\Phi} \in \bar{E}_A }
{\cal{D}}\Phi {\cal{D}}\bar{\Phi} \,\, e^{ -i S\left[\Phi|_A\right] + i S\left[\bar{\Phi}|_A\right]} \delta\left[ \Phi  |_{\partial^+A} - \bar{\Phi} |_{\partial^+A}\right]  \,,
 \end{align}
where the path integrals in the last line are over field configurations on region $A$ only.

Similarly, 
\begin{align}
&D(E_B E_{\gamma_Z}, \bar{E}_B E_{\bar{\gamma}_Z})  \\
 & =    \psi\left[\gamma |_{d}\right]^* \,  \psi\left[\bar{\gamma} |_{d}\right] 
 \,\, e^{ -i S\left[\gamma\right] + i S\left[\bar{\gamma}\right] }  \delta\left[  \gamma  |_{c} - \bar{\gamma} |_{c}\right] \delta\left[  \gamma  |_{a} - \bar{\gamma} |_{a}\right]\\ & \quad \quad \quad \quad \quad \quad \quad \int_{\Phi\in E_B, \bar{\Phi} \in \bar{E}_B }
{\cal{D}}\Phi {\cal{D}}\bar{\Phi} \,\, e^{ -i S\left[\Phi|_B\right] + i S\left[\bar{\Phi}|_B\right]} \delta\left[ \Phi  |_{\partial^+B} - \bar{\Phi} |_{\partial^+B}\right] \,,
 \end{align}
and the path integrals are over field configurations on region $B$ only. 

Finally, 
\begin{align} \label{ABZ}
&D(E_A E_B E_{\gamma_Z}, \bar{E}_A\bar{E}_B E_{\bar{\gamma}_Z}) \nonumber \\  
& =  \psi\left[\gamma |_{d}\right]^* \,  \psi\left[\bar{\gamma} |_{d}\right] 
 \,\, e^{ -i S\left[\gamma\right] + i S\left[\bar{\gamma}\right] }  \delta\left[  \gamma  |_{c} - \bar{\gamma} |_{c}\right] \nonumber\\ 
 & \quad \quad \int_{\Phi\in E_AE_B, \bar{\Phi} \in \bar{E}_A\bar{E}_B }
{\cal{D}}\Phi {\cal{D}}\bar{\Phi} \,\,  e^{ -i S\left[\Phi|_A\right] + i S\left[\bar{\Phi}|_A\right]} e^{ -i S\left[\Phi|_B\right] + i S\left[\bar{\Phi}|_B\right]} 
\\
& \quad \quad \quad  \delta\left[  \Phi  |_{\partial^+A} - \bar{\Phi} |_{\partial^+A}\right]  \delta\left[  \Phi |_{\partial^+B} - \bar{\Phi} |_{\partial^+B}\right] \,.  \nonumber
\end{align}
The regions $A$ and $B$ are disjoint, the integrand is a product of  factors that depend only on the field in $A$ and factors that depend only on the field in $B$. So the double path integral in \ref{ABZ} factorizes into a double path integral over fields on $A$ and a double path integral over fields on $B$:
\begin{align}
&D(E_A E_B E_{\gamma_Z}, \bar{E}_A\bar{E}_B E_{\bar{\gamma}_Z}) \\  
& =  \psi\left[\gamma |_{d}\right]^* \,  \psi\left[\bar{\gamma} |_{d}\right] 
 \,\, e^{ -i S\left[\gamma\right] + i S\left[\bar{\gamma}\right] }  \delta\left[  \gamma  |_{c} - \bar{\gamma} |_{c}\right]\\ & \quad \quad \int_{\Phi\in E_A, \bar{\Phi} \in \bar{E}_A }
{\cal{D}}\Phi {\cal{D}}\bar{\Phi} \,\, 
e^{ -i S\left[\Phi|_A\right] + i S\left[\bar{\Phi}|_A\right]} \,\, \delta\left[  \Phi  |_{\partial^+A} - \bar{\Phi} |_{\partial^+A}\right] \\
& \quad \quad \int_{\Phi\in E_B, \bar{\Phi} \in \bar{E}_B }
{\cal{D}}\Phi {\cal{D}}\bar{\Phi} \,\, e^{ -i S\left[\Phi|_B\right] + i S\left[\bar{\Phi}|_B\right]} 
\,\, \delta\left[  \Phi  |_{\partial^+B} - \bar{\Phi} |_{\partial^+B}\right]\,.
\end{align}

Putting this together, we see that the $\psi$ factors on the LHS and on the RHS of (\ref{qfactorize}) are equal,
as are the $e^{ -i S\left[\gamma\right] + i S\left[\bar{\gamma}\right] }$ factors,  as are the delta function factors for the $\gamma$ and $\bar{\gamma}$ histories on the future boundary of $Z$. We also see that the double path integral factors are equal on the LHS and on the RHS of (\ref{qfactorize}). Hence the result.

\end{proof}

\bibliography{../../Bibliography/refs}
\bibliographystyle{../../Bibliography/JHEP}


\end{document}